\newtheorem{theorem}{Theorem}[section]
\newtheorem{lemma}[theorem]{Lemma}
\newtheorem{proposition}[theorem]{Proposition}
\newtheorem{claim}[theorem]{Claim}
\newtheorem*{theorem*}{Theorem \ref{thm:main}}
\newtheorem{definition}[theorem]{Definition}
\newtheorem{corollary}[theorem]{Corollary}
\newtheorem{example}[theorem]{Example}
\newcommand{\F}{\mathbb{F}}
\newcommand{\C}{\mathbb{C}}
\newcommand{\R}{\mathbb{R}}
\newcommand{\Z}{\mathbb{Z}}
\newcommand{\poly}{\mathsf{poly}}
\newcommand{\Fam}{\mathcal{F}}
\newcommand{\U}{\mathcal{U}}
\newcommand{\V}{\mathcal{V}}
\newcommand{\D}{\mathbf D}
\newcommand{\Otilde}{\Tilde{O}}
\newcommand{\Supp}{\mathsf{Supp}}
\newcommand{\Span}{\mathsf{Span}}
\def\anon{0}
\newcommand{\fnote}[1]{{\color{brown} [Fatemeh: #1]}}
\newcommand{\snote}[1]{{\color{blue} [Swastik: #1]}}
\newcommand{\fnote}[1]{}
\newcommand{\snote}[1]{}
\title{Fourier Sparsity of Delta Functions\\and\\Matching Vector PIRs}
\author{Fatemeh Ghasemi\thanks{Department of Mathematics, University of Toronto, Canada. Email: \texttt{fatemeh.ghasemi@mail.utoronto.ca}}
\and 
Swastik Kopparty\thanks{Department of Mathematics and Department of Computer Science, University of Toronto, Canada. Research supported by an NSERC Discovery Grant.
Email: \texttt{swastik.kopparty@utoronto.ca}}
}
\date{}
\begin{document}
\maketitle

\begin{abstract}
In this paper we study a basic and natural question about Fourier analysis of Boolean functions, which has applications to the study of Matching Vector based Private Information Retrieval (PIR) schemes. 

For integers $m,r$, define a {\em delta function} on $\{0,1\}^r \subseteq \Z_m^r$ to be a function
$f: \Z_m^r  \to \mathbb C$
if $f(0) = 1$ and $f(x) = 0$ for all nonzero {\em Boolean} $x$.
The basic question that we study is how small can the Fourier sparsity of a delta function be; namely, how sparse can such an $f$ be in the Fourier basis?

In addition to being intrinsically interesting and natural, such questions arise naturally while studying ``$S$-decoding polynomials" for the known matching vector families. Finding $S$-decoding polynomials of reduced sparsity -- which corresponds to finding delta functions with low Fourier sparsity -- would improve the current best PIR schemes.

We show nontrivial upper and lower bounds on the Fourier sparsity of delta functions.
Our proofs are elementary and clean. These results imply limitations on improvements to the Matching Vector PIR schemes simply by finding better $S$-decoding polynomials. In particular, there are no $S$-decoding polynomials which can make Matching Vector PIRs based on the known matching vector families achieve polylogarithmic communication for constantly many servers.

Many interesting questions remain open.
\end{abstract}

\section{Introduction}

In this paper we study some basic and natural questions about Fourier analysis of Boolean functions. Specifically, we study the Fourier transform of functions $f$ defined on the whole group $\Z_m^r$ given constraints on the values that $f$ takes on the Boolean hypercube $\{0,1\}^r$. These questions were originally motivated by a connection to Matching Vector based Private Information Retrieval (PIR) schemes, but we feel that they are intrinsically interesting in their own right. 

The relationship of the subset $\{0,1\}^r$ with the superset $\Z_m^r$ is of central importance in many aspects of complexity theory. Many fundamental results about $\{0,1\}^r$ begin by viewing it as a subset of $\Z_m^r$ (or $\F_p^r$), and using the more powerful algebra of the larger domain. This includes the Razborov-Smolensky lower bounds for $AC_0$, arithmetization as a tool for interactive and probabilistically checkable proofs, and many more.

Let $m, r$ be integers. Our main concern is the Fourier behaviour of the Boolean hypercube $\{0,1\}^r$ contained within the abelian group $G = \Z_m^r$. 
Specifically, we will be interested in estimating the minimum possible {\em Fourier sparsity} of a {\em delta function} $f: G \to \mathbb C$.
\begin{itemize}
\item {\bf Fourier Sparsity: } For $i \in [m^r]$, let $\psi_i: G \to \C$ be the Fourier characters of $G$. Every function $f$ from $G$ to $\C$ can be expressed uniquely as a linear combination of the $\psi_i$. The number of $\psi_i$ that get a nonzero coefficient is called the {\em Fourier sparsity} of $f$.
\item {\bf Delta function: } A function $f : G \to \C$ is called a {\em delta function} if $f(0) = 1$, and for all $x \in \{0,1\}^r \setminus \{0\}$, we have $f(x) = 0$ (informally, it behaves like ``{\bf the} delta function"
on $\{0,1\}^r$).
\end{itemize}

On first reading, it may be useful to think about the asymptotic where $m$ is a fixed constant, even $3$, and $r \to \infty$.

The set of all delta functions is a $(m^r-2^r)$-dimensional affine subspace of the $m^r$-dimensional space of all functions, and thus by general linear-algebra principles, there is always a subset of $2^r$ Fourier characters whose span contains some delta function. Can this $2^r$ be reduced?

\subsection{Overview of Results}
When $m = 2$, there is only one delta function; and its Fourier sparsity is $2^r$.

Already for $m = 3$, the situation turns out to be quite interesting, with both nontrivial upper and lower bounds. We show that there are delta functions with Fourier sparsity $\leq O((\sqrt{3})^r)$, while any delta function must have Fourier sparsity at least $\Omega( (1.5)^r )$.

The situation drastically changes with growing $m$. We show that once $m \geq r$, 
there are delta functions with Fourier sparsity $\leq r+1$, and that this is tight!

For general $m$ (with $r \to \infty$), we show an upper bound of $O(m^{r/(m-1)}) = \exp\left( \frac{\log m}{m-1} \cdot r\right)$ and a lower bound of $\Omega\left(\left(\frac{m}{m-1}\right)^r\right) = \exp\left(\frac{1}{m-1}\cdot r\right)$.

\subsubsection*{More general groups}
For applications to PIRs (discussed below), we actually need to study  a slightly generalized problem.
\begin{itemize}
    \item The abelian group $G$ is now taken to be of the more general form $\prod_{i=1}^{r} \Z_{m_i}$, where $m_1, \ldots, m_r$ are positive integers,
    \item Instead of $\mathbb C$-valued Fourier characters, for a field $\mathbb F$ containing $m_i$ distinct $m_i$'th roots of unity, we work with $\F$-valued Fourier characters.
\end{itemize}
Even in this generalized setting, we can talk about the Boolean hypercube $\{0,1\}^r \subseteq G$ (which is simply the product of each $\{0,1\} \subseteq \Z_{m_i}$), delta functions, and the Fourier expansion of functions $f: G \to \F$.

Appropriate generalizations of our lower bounds from the case where all the $m_i$ are equal to $m$ continue to hold in this generalized setting. However our upper bounds break down, and for the setting relevant to PIRs, this leaves open a much bigger gap, which would be very interesting to close.

\subsubsection*{Delta functions on $\{-1,0,1\}^r$ }

One could also talk about delta functions on sets more general that $\{0,1\}^r$. 
For a subset $B \subseteq G$ with $0 \in B$, a $B$-delta function is a function $f$ with $f(0) = 1$ and $f(b) = 0$ for all $b \in B \setminus \{0\}$.

A natural and well known example is the case $B = G$. The unique $G$-delta function is the average of all its Fourier characters: this means that its Fourier sparsity is $|G|$.

A another natural set to consider is $B = \{-1, 0, 1\}^r$ (when $G = \prod_{i=1}^{r} \Z_{m_i}$). One would expect that the Fourier sparsity of $B$-delta functions to behave very similar to the case of $\{0,1\}^r$. Surprisingly, there is a big difference!

Again from general principles, since $|B| = 3^r$, there is always a $B$-delta function of sparsity at most $3^r$. Are there $B$-delta functions with smaller sparsity?

We show that for all $G$, the Fourier sparsity of a $B$-delta function is at least $2^r$. This is in sharp contrast to the $\{0,1\}^r$ case, where the sparsity decays sharply with $m$ in the group $G = \Z_m^r$, and there are are delta functions of sparsity as small as $r+1$ once $m > r$.

\subsubsection*{Techniques}

Our proofs are all elementary and simple. The lower bounds for $\{0,1\}^r$ are based on
studying the product of a purported Fourier sparse delta function with a suitably constructed auxiliary function with desirable Fourier and primal support. The upper bounds for $\{0,1\}^r$ are by explicitly giving the functions. The lower bounds for $\{-1, 0, 1\}^r$ are proved using the linear algebra method of combinatorics.

We feel that closing the gaps between our upper and lower bounds will need some new insights about the Fourier analysis of Boolean functions, which may potentially be useful in the many other situations in theoretical computer science where $\{0,1\}^r$ is viewed as a subset of the larger domain $\F_p^r$.

Another related theme  we would like to mention is the {\em uncertainty principle} in finite abelian groups, which gives lower bounds on the Fourier sparsity of functions $f: G \to \mathbb C$ with small support. See~\cite{DonohoStark1989,Meshulam2006,Tao2005} for various statements of this principle.

\subsection{Matching Vector PIRs}

The original motivation for studying Fourier-sparse delta functions is from questions about {\em Private Information Retrieval} (PIR). A PIR scheme\footnote{PIR schemes are very closely related to Locally Decodable Codes (LDCs). Many breakthroughs happened for both these notions together, but we only talk about PIRs in this paper. In particular, there are implications of our results for the LDCs known as Matching Vector codes.} is an interactive protocol between a user and $t$ non-communicating servers, which allows the user to access any desired bit $a_i$ of a database $(a_1, \ldots, a_n) \in \{0,1\}^n$ without revealing any information about $i$. PIRs have been extensively studied ever since their introduction by Chor, Goldreich, Kushilevitz and Sudan~\cite{CGKS} in 1995. 
The main question is to achieve this with as little communication as possible.
Very little is known on the lower bounds front; it is consistent with our knowledge that PIR can be achieved with $2$-servers and $\poly\log(n)$ communication complexity!

When the number of servers $t$ is constant, it was originally believed that the amount of communication needed to be at least polynomial in $n$, of the form $\Omega(n^{\varepsilon_t})$. However, many years later, surprising results of Yekhanin~\cite{Yekhanin} and Efremenko~\cite{Efremenko} strongly refuted this belief.
These protocols are the {\em Matching Vector PIRs}, and achieve a communication complexity of $O(\exp((\log n)^{\varepsilon_t}))$ -- and these are the only known way to achieve communication complexity subpolynomial in $n$ with a constant number of servers.

There are two key ingredients of Matching Vector PIRs:
\begin{itemize}
    \item {\bf An $S$-matching vector family:} for a subset $S \subseteq \Z_m$ with $0 \in S$, an $S$-matching vector family is a collection of $2n$ vectors $u_i, v_i \in \Z_m^k$, for $i \in [n]$, such that:
    \begin{itemize}
        \item $\langle u_i, v_j \rangle = 0$ if $i = j$,
        \item  $\langle u_i, v_j \rangle \in S \setminus \{0\}$ if $i \neq j$.
    \end{itemize}
    We would like $n$ as large as possible.
    \item {\bf An $S$-decoding polynomial:} for a subset $S \subseteq \Z_m$ and a field $\F$ containing an $m$-th root of unity $\gamma_m$, an $S$-decoding polynomial is a polynomial $P(Z) \in \F[Z]$ such that:
    \begin{itemize}
        \item  $P(\gamma^0) = 1$,
        \item  $P(\gamma^s) = 0$ for each $s \in S \setminus \{0\}$.
    \end{itemize}
    We would like $P(Z)$ to be as {\em sparse} as possible.
\end{itemize}
The best known PIR protocols~\cite{Efremenko,DGY,DG,GKS} use the $S$-matching vector families coming from the work of Barrington-Biegel-Rudich~\cite{BBR} and Grolmusz~\cite{Gro}. Here $m$ is taken to be the product of primes $m_1, \ldots, m_r$, and $S$ is taken to be a product set $\prod_{i=1}^r \{0,1\} \subseteq \prod_{i=1}^r \Z_{m_i} \simeq \Z_m$. Such $S$ is called a {\em canonical} $S$ for $\Z_m$.

Even with trivial $S$-decoding polynomials, the above $S$-matching vector family alone is enough to achieve the dramatic improvement of communication complexity from polynomial to subpolynomial. It achieves communication $\exp((\log n)^{\varepsilon_t})$. A clever choice of $S$-decoding polynomial further improves the communication complexity. The theory of sparse $S$-decoding polynomials was developed and advanced by Yekhanin~\cite{Yekhanin}, Raghavendra~\cite{Raghavendra}, Efremenko~\cite{Efremenko}, Itoh-Suzuki~\cite{IS} and Chee-Feng-Ling-Wang-Zhang~\cite{Cheeetal}, to achieve significant improvements for small $t$. Under plausible number theoretic conjectures,~\cite{Cheeetal}
reduces $\varepsilon_t$ by an absolute constant factor for all $t$.\footnote{The improvement is more stark for fixed $t$: the communication complexity using a clever $S$-decoding polynomial is a further subpolynomial in the communication complexity achieved using trivial $S$-decoding polynomials.}

To improve Matching Vector PIR parameters further, there are two immediate avenues.
The first is to get even larger $S$-matching vector families -- this is an extremely interesting and basic question about linear algebra mod $m$, and has attracted quite a bit of interest~\cite{BDL-matching,ADLOS25}. To summarize the current status: it is wide open whether these exist.

The other approach to improved PIRs, is to get sparser $S$-decoding polynomials for the known $S$-matching vector families. This approach, a priori, could also be capable of achieving polylogarithmic communication for a constant number of servers: this would be possible if there exist $S$-decoding polynomials of constant sparsity modulo any $m$. This is the approach that this paper addresses. 

Our results imply that for the $S$ relevant for the known big matching vector family over $\Z_m$, 
any $S$-decoding polynomial has sparsity at least $r+1$.
This means that for a Matching Vector PIR using the known matching vector families and the best possible $S$-decoding polynomial, the communication complexity for $t$ server PIR is at least $\exp((\log n)^{1/t})$. In particular it cannot achieve polylogarithmic communication for a constant number of servers.

\subsection{Questions}

There are two very nice and basic open questions that we would like to highlight.
\begin{enumerate}
\item Let $m_1, \ldots, m_r$ be distinct primes, and let $\F$ be a field containing all the $m_i$'th roots of $1$. We work with the $\F$-valued Fourier characters. How small, as a function of $r$, can the Fourier sparsity  of delta functions on $\{0,1\}^r \subseteq \prod_{i=1}^t\Z_{m_i}$?

We showed that it cannot be smaller than $r+1$, while the best upper bounds are exponential in $r$ (these come from known sparse $S$-decoding polynomials \cite{Cheeetal}).

Can it be $r+1$? This would improve PIR communication to $\exp((\log n)^{1/t})$.

Surprisingly, if we drop the assumption that the $m_i$ are distinct, 
then $r+1$ is achievable (see Lemma~\ref{ex:up-bnd}).

\item We have the same setup as above (with the $m_i$ being distinct primes), but now we fix $\F$ to be the complex numbers $\C$. We conjecture that every delta function on $\{0,1\}^r \subseteq \prod_{i=1}^r \Z_{m_i}$ has Fourier sparsity at least $2^r$: there is no improvement on the trivial bound!

We prove the $r =2$ case in Appendix~\ref{sec:appendix-mobius}. The proof uses ideas related to the classical Schwarz lemma from complex analysis, and boils down to studying certain M\"{o}bius transformations and their behavior on the unit disk. For larger $r$ it seems to be a challenging yet basic question.

Again, for identical $m_i$, Fourier sparsity $r+1$ is achievable, so the distinctness of the primes $m_i$ has to be used.

\end{enumerate}

Finally, the question of determining the Fourier sparsity of delta functions on $\Z_m^r$ for fixed $m$ is a very cleanly stated question which we do not fully understand. We know that it is exponential is $r$, but the base of the exponent is somewhere between $m^{\frac{1}{m-1}}$ and $\frac{m}{m-1}$.

\section{Main Result}
In order to state our main results, we need the following definitions.
\begin{definition}
Let $G$ be a group and $\F$ be a field, $0 \in B, B \subseteq G$. We say:
$$f: G \rightarrow \F$$ 
is a delta function on $B$ (or $f$ is delta on $B$, or $f$ is a $B$-delta function) if:
\begin{itemize}
\item $f(0) = 1$
\item For all $b \in B \setminus \{0\}$, we have $f(b) = 0.$

\end{itemize}
\end{definition}

For $\F$-valued functions $f$ on the abelian group $G$, we will be working with the $\F$-valued Fourier transform $\widehat{f}$ (generalizing the usual $\C$-valued Fourier transform), discussed in Section 3.

In the special case where the domain is $\Z_m^r$ and the function is delta on $\{0,1\}^r$, the following theorem provides explicit bounds on the Fourier sparsity.  
Parts (i) and (ii) are special cases of the general lower bounds proved later in Section~\ref{sec:bnd-boolean}, namely Theorems~\ref{thm-lower1} and \ref{thm-lower2}, while parts (iii) and (iv) correspond to the upper bounds established in Lemma~\ref{ex:up-bnd} and Claim~\ref{claim:upper-general}.  
The proofs of all four parts appear in Section~\ref{sec:bnd-boolean}.

\begin{theorem}
\label{thm:bounds-same-domain}
 Let $f: \Z_{m}^r \rightarrow \F$ be delta on $\{0, 1\}^r \subset \Z_{m}^r$. We have:
    \begin{enumerate}[label=(\roman*)]
        \item  \(|\Supp(\widehat{f})| \geq r + 1\)
        \item  \(|\Supp(\widehat{f})| \geq (\frac{m}{m - 1})^r\)
    \end{enumerate}

In the other direction:
\begin{enumerate}[label=(\roman*)]
        \item[(iii)]  For $m > r$, there exists $f : \Z_m^r \to \F$ which is delta on $\{0,1\}^r$ with
        $|\Supp(\widehat{f})| = r + 1$.
        \item[(iv)]  For $m, r$ with $(m-1) \mid r$, there exists $f: \Z_m^r \rightarrow \F$
        which is delta on $\{0, 1\}^r$ with $|\Supp(\widehat{f})| \leq m^\frac{r}{m - 1}$.
    \end{enumerate}
\end{theorem}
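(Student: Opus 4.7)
The plan is to handle the four parts in two groups: explicit constructions for the upper bounds (iii)--(iv), and Fourier-analytic arguments for the lower bounds (i)--(ii).

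For (iii), take $u = (1,1,\ldots,1) \in \Z_m^r$ and set $f(x) = g(\langle u, x\rangle)$, where $g: \Z_m \to \F$ satisfies $g(0) = 1$ and $g(k) = 0$ for $k=1,\ldots,r$. Since $m > r$, the elements $\omega^0, \omega^1, \ldots, \omega^r$ are distinct $m$th roots of unity, so a Vandermonde argument produces $g$ lying in the span of $\{\chi_0,\ldots,\chi_r\}$. Then $\Supp(\widehat{f}) \subseteq \{j\mathbf{1} : 0 \le j \le r\}$, of size $r+1$. For (iv), write $r = k(m-1)$, partition $[r]$ into $k$ blocks of size $m-1$, and apply (iii) on each block (since $m > m-1$) to get a block-delta function of Fourier sparsity $m$; the tensor product of these over the blocks is delta on $\{0,1\}^r$ and Fourier sparsity is multiplicative over tensor products, giving $m^k = m^{r/(m-1)}$.

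For (ii), I would construct an auxiliary $g: \Z_m^r \to \F$ supported on $\{0,1\}^r$ of product form $g(x) = \prod_i h_i(x_i)$ for $x \in \{0,1\}^r$, with $h_i(0) = 1$ and $h_i(1) = -\omega^{k_i}$ for arbitrary $k_i \in \Z_m$. Then $\widehat{g}(\alpha)$ factors as a scalar multiple of $\prod_i (1 - \omega^{k_i - \alpha_i})$, which vanishes iff $\alpha_i = k_i$ for some $i$, so $|\Supp(\widehat{g})| = (m-1)^r$. Since $g$ is supported on $\{0,1\}^r$ with $g(0) = 1$, the product $fg$ equals $\delta_0$ on all of $\Z_m^r$, so $\widehat{fg}=\widehat{\delta_0}$ has support of size $m^r$. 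On the other hand $\Supp(\widehat{fg}) \subseteq \Supp(\widehat{f}) + \Supp(\widehat{g})$, yielding $m^r \le |\Supp(\widehat{f})|\cdot(m-1)^r$ and so $|\Supp(\widehat{f})| \ge (m/(m-1))^r$.

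For (i), I would proceed by induction on $r$. Slicing $f$ at the last coordinate gives $f_0(x') = f(x',0)$ and $f_1(x') = f(x',1)$; then $f_0$ is delta on $\{0,1\}^{r-1}$, the function $f_1$ vanishes on $\{0,1\}^{r-1}$, and $f_0 - f_1$ is also delta on $\{0,1\}^{r-1}$. The inductive hypothesis gives $|\Supp(\widehat{f_0})|, |\Supp(\widehat{f_0 - f_1})| \ge r$, so the projected support $S_0 := \{\alpha_{<r} : \alpha \in S\}$ satisfies $|S_0| \ge r$. If $|S_0| \ge r+1$, or if some fiber $S_\beta := \{\alpha \in S : \alpha_{<r} = \beta\}$ has size at least $2$, then $|S| \ge r+1$ immediately. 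Otherwise $|S_0| = r$ and every fiber is a singleton $S_\beta = \{(\beta, \gamma_\beta)\}$, giving $|S| = r$; then $f_1 = \sum_{\beta \in S_0} c_{(\beta,\gamma_\beta)}\,\omega^{\gamma_\beta}\,\chi_\beta$ has every coefficient nonzero yet vanishes on $\{0,1\}^{r-1}$, producing a nontrivial linear dependence among $\{\chi_\beta|_{\{0,1\}^{r-1}}\}_{\beta\in S_0}$. Subtracting a suitable multiple of this dependence from $f_0$'s Fourier expansion kills one coefficient while preserving $f_0|_{\{0,1\}^{r-1}} = \delta_0$, producing a delta function on $\{0,1\}^{r-1}$ of Fourier sparsity at most $r-1$ --- contradicting the inductive hypothesis. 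The main obstacle is exactly this last subcase: one must convert the vanishing of $f_1$ into a usable linear dependence and then exploit it to strictly reduce the Fourier sparsity of $f_0$ without breaking its delta property.
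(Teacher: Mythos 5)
Your constructions for (iii) and (iv) coincide with the paper's: (iii) builds $f$ out of a univariate interpolant in $\omega^{x_1+\cdots+x_r}$ vanishing at levels $1,\ldots,r$, and (iv) tensors $r/(m-1)$ copies of the block construction, using that Fourier sparsity is multiplicative over tensor products. Your (ii) also matches the paper's structure precisely: pair the purported sparse delta function against an auxiliary $g$ supported on $\{0,1\}^r$ with $g(0)\neq 0$ and Fourier support of size $(m-1)^r$, note $fg$ is the total delta function, and apply $\Supp(\widehat{fg})\subseteq\Supp(\widehat f)+\Supp(\widehat g)$. Your specific $g$ (with $h_i(0)=1$, $h_i(1)=-\omega^{k_i}$, $h_i\equiv 0$ elsewhere) differs superficially from the paper's $h_i(x_i)=\omega^{x_i}\prod_{j=2}^{m-1}(\omega^{x_i}-\omega^j)$, but plays the same role and yields the same count.

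Part (i) is where you take a genuinely different route, and your argument does work. The paper gives two non-inductive proofs: one reuses the auxiliary-$g$ sumset identity $\Supp(\widehat f)+\prod_i\{1,\ldots,m_i-1\}=\prod_i\Z_{m_i}$ and then diagonalizes (constructing $u$ with $u_i=b^i_i$ to show $t\leq r$ elements cannot suffice); the other encodes the vanishing of $f$ on $\{0,1\}^r\setminus\{0\}$ as the polynomial identity $\sum_j b_j\prod_i(1+T_i\alpha_i(j))=\sum_j b_j$ and specializes $T_i=-1/\alpha_i(i)$. You instead induct on $r$ by slicing the last coordinate: $f_0=f(\cdot,0)$ is delta on $\{0,1\}^{r-1}$ and $f_1=f(\cdot,1)$ vanishes on $\{0,1\}^{r-1}$, both with Fourier support inside the projection $S_0$ of $S=\Supp(\widehat f)$, so the inductive hypothesis forces $|S_0|\geq r$; if $|S_0|>r$ or any fiber has two elements you are done, and in the tight case (each fiber a singleton $(\beta,\gamma_\beta)$, so $|S|=r$) the function $g=f_0-\omega^{-\gamma_{\beta_0}}f_1$ is still delta on $\{0,1\}^{r-1}$ (since $f_1$ vanishes there and $g(0)=1$), kills the $\beta_0$-coefficient, and hence has sparsity $\leq r-1$, contradicting the inductive hypothesis. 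You flagged this subtraction step as the delicate point, but it is in fact sound: the ``dependence'' is simply $f_1$ itself, and the degenerate possibility $g\equiv 0$ is excluded by $g(0)=1$. Your inductive proof is more self-contained (it does not rely on the auxiliary-$g$ machinery for this part), while the paper's diagonalization is shorter once that machinery is already in place for (ii), and the paper's polynomial-identity proof has the advantage of exhibiting the obstruction very explicitly.
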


One application of our results is to $S$-decoding polynomials, a tool in the construction of Matching Vector PIRs. We describe this now. 

\begin{definition}[$S$-decoding polynomial for the canonical set $S$]
Let $m = p_1 \cdots p_r$ and define the canonical set
\begin{align*}
S &= \{x \in \Z_m : x^2  = x \}\\
&= \{ x \in \Z_m \mid  (x \mod p_i) \in \{0,1\}  \mbox{ for each } i\} .
\end{align*}
Let $\F$ be a field containing a primitive $m$-th root of unity $\gamma_m$.
A polynomial $P(Z) \in \F[Z]$ is called a $S$-\emph{decoding polynomial} if
\begin{itemize}
    \item $P(\gamma_m^s) = 0$ for all $s \in S \setminus \{0\}$,
    \item $P(\gamma_m^0) = P(1) = 1$.
\end{itemize}
If $P$ has exactly $t$ monomials, we say it is \emph{$t$-sparse}.
\end{definition}

We restrict here to the canonical set $S$; a more general definition for arbitrary subsets $S \subseteq \Z_m$ will be given in Section~\ref{poly-to-Fourier}.

$S$-decoding polynomials play a central role in the construction of Matching Vector PIR schemes: sparser polynomials yield the same communication complexity with fewer servers.  
For the canonical set $S \subseteq \Z_m$, our results show that such polynomials cannot be very sparse.  

The following theorem gives the precise lower bound; its proof appears in Section~\ref{poly-to-Fourier} as Corollary \ref{cor:poly-lower}.

\begin{theorem}
\label{thm:s-decoding}
Let $m = p_1 \cdots p_r$, and let $S$ be the canonical set for $m$.  
If $P(Z) \in \F[Z]$ is an $S$-decoding polynomial, then the number of its monomials is at least $r+1$.
\end{theorem}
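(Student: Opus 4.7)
The plan is to reduce Theorem \ref{thm:s-decoding} to the Fourier sparsity lower bound of Theorem \ref{thm:bounds-same-domain}(i), generalized to product groups $\prod_{i=1}^r \Z_{p_i}$ (which the paper announces as being available in the more general setting). The bridge between polynomials in $\F[Z]$ and functions on the group is provided by exponentiation by $\gamma_m$, and the bridge between $\Z_m$ and $\prod_{i=1}^r \Z_{p_i}$ is the Chinese Remainder Theorem.

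First, I would fix the CRT isomorphism $\phi : \prod_{i=1}^r \Z_{p_i} \to \Z_m$. Under $\phi$, the zero element maps to $0$, and the subset $\{0,1\}^r$ maps precisely onto the canonical set $S$, since the idempotents of $\Z_m$ are exactly the elements whose CRT coordinates lie in $\{0,1\}$. Next, given an $S$-decoding polynomial $P(Z) = \sum_{j=1}^t c_j Z^{a_j}$ with $t$ monomials and exponents $a_j \in \Z_m$, I would define
\[
f : \Z_m \to \F, \qquad f(x) = P(\gamma_m^x) = \sum_{j=1}^t c_j \gamma_m^{a_j x}.
\]
Because the Fourier characters of $\Z_m$ are exactly the maps $x \mapsto \gamma_m^{ax}$ for $a \in \Z_m$, this display is already a Fourier decomposition of $f$, so $|\Supp(\widehat{f})| \leq t$.

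Now pull $f$ back to $\tilde f = f \circ \phi : \prod_{i=1}^r \Z_{p_i} \to \F$. The defining properties of $P$ give $\tilde f(0) = P(1) = 1$ and $\tilde f(b) = P(\gamma_m^{\phi(b)}) = 0$ for every nonzero $b \in \{0,1\}^r$, since such $\phi(b)$ lies in $S \setminus \{0\}$. Thus $\tilde f$ is a delta function on $\{0,1\}^r \subseteq \prod_{i=1}^r \Z_{p_i}$. Since $\phi$ is an isomorphism of abelian groups, the induced map on character groups is a bijection that matches Fourier expansions term by term, so $|\Supp(\widehat{\tilde f})| = |\Supp(\widehat{f})| \leq t$.

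Applying the $r+1$ lower bound from Theorem \ref{thm:bounds-same-domain}(i) (in its generalization to $\prod_{i=1}^r \Z_{p_i}$) then forces $t \geq r+1$, which is the desired conclusion. There is no real obstacle here beyond bookkeeping: the only point that requires a moment's care is verifying that CRT preserves Fourier sparsity, which follows at once from the fact that an abelian group isomorphism transports a character basis to a character basis. The real content of the theorem is entirely concentrated in the delta-function lower bound already proved earlier in the paper.
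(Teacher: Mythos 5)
Your proof is correct and follows essentially the same route as the paper: interpret the monomial expansion of $P$ evaluated at $\gamma_m^x$ as a Fourier expansion, transport the delta property across the CRT isomorphism to a delta function on $\{0,1\}^r \subseteq \prod_i \Z_{p_i}$, and invoke the $r+1$ lower bound of Theorem~\ref{thm-lower1}. The only cosmetic difference is that the paper carries out the identification by hand, factoring $\gamma_m = \prod_i \omega_{p_i}$ and expanding exponents coordinatewise, whereas you package the same computation more abstractly via the induced bijection on character groups; your statement $|\Supp(\widehat{f})| \leq t$ (rather than $=t$) is also slightly more careful, and suffices for the lower bound.
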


The consequences for Matching Vector PIRs are discussed in Section~\ref{sec:pir}.

Our final main result concerns delta functions on a set different from the Boolean hypercube, namely $\{-1,0,1\}^r$.  
For such functions we obtain a much stronger lower bound on Fourier sparsity, that does not decay with $m$.  
The proof will be given in Section~\ref{sec:bnd-trinary} as a corollary of Claim~\ref{claim:bnd-diff}.  

\begin{theorem}
If
$f: \Z_{m}^r \rightarrow \F$ is delta on $\{-1, 0, 1\}^r$, then we have:
$$|\Supp(\widehat{f})| \geq 2^r$$  
\end{theorem}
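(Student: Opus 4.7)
The plan is to exploit the identity $\{-1, 0, 1\}^r = \{0, 1\}^r - \{0, 1\}^r$ (as a Minkowski difference set) and then run a rank argument on a Gram-style matrix built out of the values of $f$ at differences of Boolean vectors.

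Concretely, I would form the $2^r \times 2^r$ matrix $M$ with rows and columns indexed by $x, y \in \{0, 1\}^r$ and entries $M_{x, y} = f(x - y)$. Since $x - y \in \{-1, 0, 1\}^r$ whenever $x, y \in \{0, 1\}^r$, the delta hypothesis on $f$ applies coordinate-wise and gives $M_{x, y} = 1$ when $x = y$ and $0$ otherwise. Thus $M$ is the $2^r \times 2^r$ identity matrix, so $\mathrm{rank}(M) = 2^r$.

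On the other hand, writing $T = \Supp(\widehat f)$, the Fourier expansion $f(z) = \sum_{a \in T} \widehat f(a)\, \chi_a(z)$ together with the multiplicative character identity $\chi_a(x - y) = \chi_a(x)\, \chi_a(-y)$ yields
$$ M_{x, y} \;=\; \sum_{a \in T} \widehat f(a)\, \chi_a(x)\, \chi_a(-y). $$
This realizes $M$ as a product $A\, D\, B^{\top}$, where $A$ is the $2^r \times |T|$ matrix with $A_{x, a} = \chi_a(x)$, $B$ is the $2^r \times |T|$ matrix with $B_{y, a} = \chi_a(-y)$, and $D$ is the $|T| \times |T|$ diagonal matrix of Fourier coefficients $\widehat f(a)$. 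Any such factorization forces $\mathrm{rank}(M) \leq |T|$, and combining with the previous paragraph gives $2^r \leq |T| = |\Supp(\widehat f)|$.

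There is no real obstacle beyond spotting the structural fact that $\{-1, 0, 1\}^r$ is the difference set of the Boolean hypercube and packaging the delta-values into a matrix of the correct dimensions. Based on the label \texttt{claim:bnd-diff} referenced in the excerpt, I expect the paper first to prove the natural abstraction: for any finite abelian group $G$ and any subset $C \subseteq G$, every $f : G \to \F$ that is delta on $C - C$ satisfies $|\Supp(\widehat f)| \geq |C|$, by exactly the above $|C| \times |C|$ rank factorization. The theorem then follows by specializing $G = \Z_m^r$ and $C = \{0, 1\}^r$.
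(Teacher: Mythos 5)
Your proof is correct and is essentially the same argument the paper gives in Claim~\ref{claim:bnd-diff}: the paper considers the translates $f_d(x) = f(x-d)$ for $d \in \{0,1\}^r$, shows they lie in the span of the $|T|$ characters in $\Supp(\widehat f)$, and proves them linearly independent by evaluating at the points of $\{0,1\}^r$ --- the resulting evaluation matrix $\bigl(f(d-d_i)\bigr)_{d,d_i}$ is exactly your matrix $M$, and your factorization $M = A D B^{\top}$ is the same observation packaged as a rank bound rather than as containment of the $f_d$ in a low-dimensional function space. The only cosmetic difference is that the paper states the abstraction slightly more generally (a delta function on $B$ together with any $D$ satisfying $D - D \subseteq B$), whereas you fix $B = C - C$; the corollary to $\{-1,0,1\}^r$ is identical.
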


\section{Bounds for Delta Functions}
In this section, we first derive bounds for delta functions on the Boolean hypercube $\{0,1\}^r$, and then we establish a lower bound for delta functions over $\{-1,0,1\}^r$.

\subsection{Basic Facts from Fourier Analysis}
We collect the Fourier-analytic facts used in this paper. 
Throughout, $G$ is a finite abelian group of size $m$, written additively.

For a function $h$ on a finite set $X$, we define its support as
\[
\Supp(h) \;:=\; \{\,x \in X : h(x) \neq 0\,\}.
\]
For subsets $A,B$ of an additive group, we write
\[
A+B \;:=\; \{\,a+b : a \in A,\ b \in B\,\}
\]
for their sumset.

We now define the Fourier transform and state the basic Fourier inversion formula.

\begin{definition}[Fourier transform over a finite abelian group, $\F$-valued]
Let $G$ be a finite abelian group, and let $e$ denote the exponent of $G$
(i.e., the least common multiple of the orders of elements of $G$).
Let $\F$ be a field with $\operatorname{char}(\F)\nmid |G|$ such that
$\F$ contains all $e$-th roots of unity (equivalently, a primitive $e$-th
root of unity, and hence all its powers).

A \emph{character} of $G$ is a group homomorphisms from $G$ to $\F^{\times}$, where $\F^\times$ is the multiplicative group of $\F$.
Let $\widehat{G}$ be the group of all characters of $G$.


For a function $f\colon G\to \F$, its Fourier transform is the function
$\widehat f\colon \widehat G\to \F$ defined by
\[
\widehat f(\psi)
\;=\; \frac{1}{|G|}\sum_{x\in G} f(x)\,\psi(x)^{-1},
\qquad \psi\in\widehat G.
\]
\end{definition}

Whenever we talk about the Fourier transform of $\F$-valued functions on $G$, we assume that $\operatorname{char}(|\F|) \not\mid |G|$, and that $\F$ has the required roots of unity in it. 
The first assumption is non-negotiable. The second assumption is without loss of generality, by replacing $\F$ with a finite extension containing the required roots of unity (this is possible because of the first assumption).

Next we state the basic fact about the Fourier transform of the product of functions.

\begin{lemma}[Character basis and explicit parametrization]\label{lem:fourier-basis}
Let $\F$ be a field and $G \cong \oplus_{i = 1}^{r} \Z_{m_i}$ be a finite abelian group such that 
$char(\F) \nmid |G|$ and 
$\F$ contains primitive $m_i$-th roots of unity $\omega_{m_i}$ for $1 \leq i \leq r$.

For any $a = (a_1,..., a_r) \in \oplus_{i = 1}^{r} \Z_{m_i}$, 
we have a character $\psi_a$ given by:
$$\psi_a(x_1,..., x_r) = \Pi_{i = 1}^r \omega_{m_i}^{a_i x_i}.$$
every character $\psi: G \rightarrow \F$ is of the form $\psi_a$ for some $a$.

Thus the character group $\widehat{G}$ is isomorphic to $\oplus_{i=1}^r \Z_{m_i}$,
and we identify them.

With this notation, every function
$f: G \rightarrow \F$
can be written as 
$$f = \sum_{a \in \widehat{G}} \widehat{f}(a).\psi_a$$
\end{lemma}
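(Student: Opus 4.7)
The plan is to establish the four assertions in the lemma in order: (a) each $\psi_a$ is a well-defined character, (b) every character of $G$ has this form, (c) the map $a\mapsto \psi_a$ gives an isomorphism $\bigoplus_i \Z_{m_i}\to\widehat G$, and (d) every $f:G\to\F$ admits the stated Fourier expansion. Throughout I would rely on the standing assumptions that $\operatorname{char}(\F)\nmid |G|$ and that $\F$ contains the required primitive roots of unity $\omega_{m_i}$.

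For (a), since $\omega_{m_i}$ is a primitive $m_i$-th root of unity, the value $\omega_{m_i}^{a_i x_i}$ depends only on $a_i x_i$ modulo $m_i$, so the formula for $\psi_a$ is well-defined on $\bigoplus_i \Z_{m_i}$. Multiplicativity $\psi_a(x+y)=\psi_a(x)\psi_a(y)$ and the nonvanishing $\psi_a(x)\in\F^\times$ are immediate from the exponent laws, showing $\psi_a\in\widehat G$. For (b), let $\psi\in\widehat G$ and let $e_i\in G$ be the generator of the $i$-th direct summand. Then $\psi(e_i)^{m_i}=\psi(m_i e_i)=\psi(0)=1$, so $\psi(e_i)$ is a root of $X^{m_i}-1\in\F[X]$. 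Because $\operatorname{char}(\F)\nmid m_i$, this polynomial is separable and has exactly $m_i$ roots in $\F$, which are precisely the powers $\omega_{m_i}^{a_i}$ for $a_i\in\Z_{m_i}$. Choose such $a_i$ for each $i$; then multiplicativity gives $\psi(x)=\prod_i \psi(e_i)^{x_i}=\psi_a(x)$. Assertion (c) then follows: $a\mapsto\psi_a$ is a homomorphism from $\bigoplus_i\Z_{m_i}$ to $\widehat G$ under pointwise multiplication, and is bijective by (a) and (b).

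For (d), substituting the definition of $\widehat f$ gives
\[
\sum_{a\in\widehat G}\widehat f(a)\,\psi_a(x)
=\frac{1}{|G|}\sum_{y\in G} f(y)\sum_{a\in\widehat G}\psi_a(x-y),
\]
so the whole claim reduces to the orthogonality identity
\[
\sum_{a\in\widehat G}\psi_a(z)=|G|\cdot\mathbf{1}[z=0]\qquad\text{in }\F.
\]
For $z=0$ the left side is $|\widehat G|=|G|$. For $z\neq 0$, write $z=(z_1,\dots,z_r)$ with some coordinate $z_i\not\equiv 0\pmod{m_i}$, and consider $b=e_i$: then $\psi_b(z)=\omega_{m_i}^{z_i}\neq 1$, using again that $\omega_{m_i}$ is primitive. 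Reindexing the sum by $a\mapsto a+b$ yields $\psi_b(z)\sum_a\psi_a(z)=\sum_a\psi_a(z)$, i.e.\ $(\psi_b(z)-1)\sum_a\psi_a(z)=0$, and since $\psi_b(z)-1$ is a nonzero element of the field $\F$, we conclude $\sum_a\psi_a(z)=0$. Uniqueness of the expansion in (d) follows either by comparing dimensions (the $|G|$ functions $\psi_a$ span $\F^G$ by the above inversion, hence form a basis) or by reapplying the orthogonality identity directly.

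The main point requiring care (rather than a genuine obstacle) is verifying that both standard tools from the $\C$-valued theory survive over $\F$: the divisibility condition $\operatorname{char}(\F)\nmid |G|$ is used to ensure that $1/|G|\in\F$ and that each $X^{m_i}-1$ splits into $m_i$ \emph{distinct} linear factors, which is what lets us pick $a_i$ in step (b) and lets us cancel $\psi_b(z)-1$ in the orthogonality argument. Once these are in hand, the proof is essentially a transcription of the classical Pontryagin duality argument for finite abelian groups into the $\F$-valued setting.
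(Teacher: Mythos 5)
The paper states this lemma as a standard fact and does not supply a proof, so there is no in-paper argument to compare against; your task is therefore simply to prove it, and your proof is correct. The four-step structure is the standard Pontryagin-duality argument transcribed to the $\F$-valued setting, and you correctly identify where the hypotheses $\operatorname{char}(\F)\nmid|G|$ and the existence of primitive $\omega_{m_i}$ are needed: separability of $X^{m_i}-1$ gives exactly $m_i$ character values at $e_i$, and invertibility of $|G|$ and of $\psi_b(z)-1$ makes the inversion/orthogonality argument go through.

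One small spot to tighten: in (c) you assert bijectivity ``by (a) and (b),'' but (a) and (b) give well-definedness and surjectivity, not injectivity. Injectivity should be noted explicitly, and it is immediate from primitivity: if $\psi_a=\psi_b$ then evaluating at $e_i$ gives $\omega_{m_i}^{a_i}=\omega_{m_i}^{b_i}$, and since $\omega_{m_i}$ has order exactly $m_i$ this forces $a_i\equiv b_i\pmod{m_i}$ for each $i$. (Alternatively, once surjectivity is known, injectivity follows from the cardinality count $|\widehat G|\le|G|$, but the direct argument is cleaner.) With that one-line addition the proof is complete.
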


\begin{lemma}
    \label{conv}
Let $G$ be a finite abelian group and $\F$ be a field.  
Suppose $f,g: G \to \F$ are functions and define $h = f \cdot g$ as their pointwise product.  
Then the Fourier transform of $h$ is given by the convolution of the Fourier transforms of $f$ and $g$:
\[
\widehat{h}(a) = \bigl(\ \widehat{f} * \widehat{g}\,\bigr)(a) = \sum_{b \in \widehat{G}} \widehat{f}(b) \widehat{g}(a - b)
\]
This implies:
\[
\Supp(\widehat{h}) \subseteq \Supp(\widehat{f}) + \Supp(\widehat{g})
\]
\end{lemma}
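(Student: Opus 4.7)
The plan is to derive the convolution identity directly from the Fourier inversion formula given in Lemma~\ref{lem:fourier-basis}, and then read off the support containment as an immediate corollary. The whole argument is a bookkeeping exercise that hinges on the multiplicative behavior of characters: since each $\psi_b$ is a homomorphism $G \to \F^\times$, the pointwise product $\psi_b \cdot \psi_c$ is again a character, and the parametrization in Lemma~\ref{lem:fourier-basis} identifies it as $\psi_{b+c}$.

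First, I would expand both factors using Fourier inversion, writing $f = \sum_{b \in \widehat G} \widehat{f}(b)\,\psi_b$ and $g = \sum_{c \in \widehat G} \widehat{g}(c)\,\psi_c$. Multiplying pointwise and using bilinearity yields
\[
h(x) \;=\; f(x)g(x) \;=\; \sum_{b,c \in \widehat G} \widehat{f}(b)\,\widehat{g}(c)\,\psi_b(x)\,\psi_c(x).
\]
Next I would substitute $\psi_b(x)\psi_c(x) = \psi_{b+c}(x)$ and reindex by $a = b + c$, grouping terms according to the value of $a$:
\[
h(x) \;=\; \sum_{a \in \widehat G} \Bigl(\sum_{b \in \widehat G} \widehat{f}(b)\,\widehat{g}(a-b)\Bigr)\psi_a(x).
\]
By the uniqueness of the expansion in the character basis (again from Lemma~\ref{lem:fourier-basis}, using that $\operatorname{char}(\F) \nmid |G|$ so the characters form a basis), the coefficient in front of $\psi_a$ must equal $\widehat{h}(a)$, giving the desired convolution formula.

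For the support containment, I would observe that if $\widehat{h}(a) \neq 0$, then at least one summand $\widehat{f}(b)\,\widehat{g}(a-b)$ in $\sum_b \widehat{f}(b)\,\widehat{g}(a-b)$ is nonzero, forcing $b \in \Supp(\widehat f)$ and $a-b \in \Supp(\widehat g)$, hence $a = b + (a-b) \in \Supp(\widehat f) + \Supp(\widehat g)$.

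There is no genuine obstacle here; the only subtle point worth being careful about is the identification $\psi_b \psi_c = \psi_{b+c}$, which requires the explicit parametrization from Lemma~\ref{lem:fourier-basis} and the fact that pointwise multiplication of characters corresponds to addition in $\widehat G$. Everything else is formal manipulation.
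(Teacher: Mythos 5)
Your proof is correct. The paper does not actually include a proof of this lemma in the compiled text (it is stated as a standard Fourier-analytic fact), but the commented-out proof of Corollary~\ref{product-supp} in the source proceeds along exactly the same lines you do: expand $f$ and $g$ in the character basis, multiply, use that $\psi_b\psi_c$ is again a character, and read off the Fourier coefficients by uniqueness. Your derivation of the support containment from the convolution formula is also the standard one. No gaps.
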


\begin{corollary}
\label{product-supp}
Let $f,g : G \to \F$ be functions on a finite abelian group $G$, and set $h = f \cdot g$. 
Then
\[
|\Supp(\widehat{h})| \;\leq\; |\Supp(\widehat{f})| \, |\Supp(\widehat{g})|.
\]
\end{corollary}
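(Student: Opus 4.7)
The plan is to derive this directly from Lemma~\ref{conv}, which is the substantive statement; the corollary is essentially a counting observation about sumsets. So my proof will consist of two short steps: first invoke the lemma to reduce the bound on $|\Supp(\widehat{h})|$ to a bound on a sumset in $\widehat{G}$, and then bound that sumset by the product of the sizes of its two summands.

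More concretely, I would first apply Lemma~\ref{conv} to the product $h = f \cdot g$ to obtain the containment
\[
\Supp(\widehat{h}) \;\subseteq\; \Supp(\widehat{f}) + \Supp(\widehat{g}),
\]
where the right-hand side is the sumset in $\widehat{G}$. Taking cardinalities yields $|\Supp(\widehat{h})| \leq |\Supp(\widehat{f}) + \Supp(\widehat{g})|$.

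The second step is the trivial sumset bound: for any finite subsets $A, B$ of an abelian group, $|A + B| \leq |A|\cdot|B|$. This follows because every element of $A + B$ is, by definition, the image of some pair $(a,b) \in A \times B$ under the addition map $(a,b)\mapsto a+b$, so $|A+B|$ is at most the size of the domain $|A\times B| = |A|\cdot|B|$. Applying this with $A = \Supp(\widehat{f})$ and $B = \Supp(\widehat{g})$ gives the desired inequality
\[
|\Supp(\widehat{h})| \;\leq\; |\Supp(\widehat{f})|\,|\Supp(\widehat{g})|.
\]

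There is no real obstacle here; the only content beyond Lemma~\ref{conv} is the one-line sumset bound, and the claim is a formal consequence. I would keep the write-up to two or three sentences, referencing the lemma and the surjection from $\Supp(\widehat{f}) \times \Supp(\widehat{g})$ onto $\Supp(\widehat{f}) + \Supp(\widehat{g})$.
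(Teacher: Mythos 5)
Your proof is correct and is exactly the intended derivation: the paper states the corollary immediately after Lemma~\ref{conv} without a separate proof, precisely because it follows from the containment $\Supp(\widehat{h}) \subseteq \Supp(\widehat{f}) + \Supp(\widehat{g})$ together with the trivial sumset bound $|A+B| \leq |A|\cdot|B|$. Nothing to add.
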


Finally, we recall the Fourier expansion of the (unique) $G$-delta function,
which we call {\em the total delta function} to avoid ambiguity.

\begin{lemma}[Fourier expansion of the total delta function]
\label{delta}
Let $G$ be a finite abelian group and let $\F$ be a field with $\operatorname{char}(\F)\nmid |G|$.
Let $\delta$ denote the total delta function on $G$, i.e.
\[
\delta(x) \;=\;
\begin{cases}
1 & \text{if } x=0,\\
0 & \text{otherwise}.
\end{cases}
\]
Then its Fourier transform is given by
\[
\widehat{\delta}(a) \;=\; \frac{1}{|G|}
\qquad \text{for all } a \in \widehat G.
\]
In particular, $\widehat{\delta}$ is nonzero on every character, and so $\delta$ has Fourier sparsity exactly $|G|$.

\end{lemma}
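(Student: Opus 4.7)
The plan is to prove this by direct computation from the definition of the Fourier transform, since the function $\delta$ is supported on a single point and the sum collapses immediately.

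First, I would apply the definition of $\widehat{\delta}$ to an arbitrary character $\psi \in \widehat{G}$:
\[
\widehat{\delta}(\psi) \;=\; \frac{1}{|G|}\sum_{x\in G} \delta(x)\,\psi(x)^{-1}.
\]
Since $\delta(x)=0$ for every $x\neq 0$, only the $x=0$ term contributes, giving
\[
\widehat{\delta}(\psi) \;=\; \frac{1}{|G|}\,\delta(0)\,\psi(0)^{-1} \;=\; \frac{1}{|G|}\,\psi(0)^{-1}.
\]
Then I would observe that $\psi\colon G\to \F^\times$ is a group homomorphism, so $\psi(0)=1_{\F}$, whence $\psi(0)^{-1}=1_{\F}$. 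This yields $\widehat{\delta}(\psi) = 1/|G|$ for every $\psi$, proving the first claim.

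For the second claim, I would invoke the hypothesis $\operatorname{char}(\F)\nmid |G|$: this is precisely the statement that $|G|\cdot 1_{\F}$ is invertible in $\F$, so $1/|G|$ denotes a nonzero element of $\F$. Consequently $\widehat{\delta}(\psi)\neq 0$ for every $\psi\in \widehat{G}$. Combining this with the fact (from Lemma~\ref{lem:fourier-basis}) that $|\widehat{G}|=|G|$, we conclude $|\Supp(\widehat{\delta})|=|G|$, i.e.\ the Fourier sparsity of $\delta$ is exactly $|G|$.

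There is no real obstacle here: the only subtlety worth flagging explicitly is that the hypothesis $\operatorname{char}(\F)\nmid|G|$ is used in two slightly different roles — it guarantees that the normalizing factor $1/|G|$ appearing in the definition of $\widehat{\delta}$ makes sense, and simultaneously that the resulting value is nonzero in $\F$. Everything else reduces to unwinding definitions.
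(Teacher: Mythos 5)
Your proof is correct and is the standard direct computation; the paper states Lemma~\ref{delta} without proof precisely because this is the only natural argument. Your remark about the dual role of the hypothesis $\operatorname{char}(\F)\nmid|G|$ (making $1/|G|$ both well-defined and nonzero) is a nice clarification, and citing Lemma~\ref{lem:fourier-basis} for $|\widehat{G}|=|G|$ is exactly the right justification.
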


\subsection{Bounds for Delta Functions on the Boolean Hypercube}\label{sec:bnd-boolean}

In what follows, we first prove two lower bounds for functions  
\[
f : \Z_{m_1} \times \cdots \times \Z_{m_r} \;\to\; \F
\]
that are delta on $\{0,1\}^r$. We then turn to the more specific setting 
\[
f : \Z_m^r \;\to\; \F,
\]
and establish an upper bound. In particular, when $m>r$, this upper bound matches the lower bound $r+1$, showing that the bound is tight.

\paragraph{First lower bound.}
Let 
\[
f : \Z_{m_1} \times \cdots \times \Z_{m_r} \;\to\; \F
\]
be a delta function on $\{0,1\}^r$. We present two proofs showing that its Fourier sparsity is at least $r+1$.

The following theorem formalizes this bound.

\begin{theorem}\label{thm-lower1}
Let 
\[
f : \Z_{m_1} \times \cdots \times \Z_{m_r} \;\to\; \F
\]
be a delta function on $\{0,1\}^r \subseteq \Z_{m_1} \times \cdots \times \Z_{m_r}$,
where $m_1,\dots,m_r$ are (not necessarily distinct) positive integers.  
Assume further that $\F$ contains $m_i$-th roots of unity $\omega_{m_i}$ for $1\leq i \leq r$.  
Then
\[
|\Supp(\widehat{f})| \;\geq\; r+1.
\]
\end{theorem}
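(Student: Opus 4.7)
The plan is to translate the delta condition into a structured tensor identity in $(\F^2)^{\otimes r}$ and then bound the number of terms needed by induction on $r$.

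First the reduction. Write $f = \sum_{a \in T} c_a \psi_a$ with $T = \Supp(\widehat{f})$ and $c_a = \widehat{f}(a) \ne 0$, so $|T|$ is the Fourier sparsity we want to bound. By Lemma~\ref{lem:fourier-basis}, $\psi_a(x) = \prod_{i} \omega_{m_i}^{a_i x_i}$; restricted to $x \in \{0,1\}^r$ this equals $\prod_{i : x_i = 1} y_{a,i}$ where $y_{a,i} := \omega_{m_i}^{a_i}$ is a nonzero element of $\F$. Identifying each function $h : \{0,1\}^r \to \F$ with the tensor $(h(y))_{y \in \{0,1\}^r} \in (\F^2)^{\otimes r}$, the indicator of $\{0\}$ corresponds to $(1,0)^{\otimes r}$, and the restriction $\psi_a|_{\{0,1\}^r}$ corresponds to the rank-one tensor $\bigotimes_{i=1}^r (1, y_{a,i})$. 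The delta condition becomes
\[
(1,0)^{\otimes r} \;=\; \sum_{a \in T} c_a \bigotimes_{i=1}^{r} (1, y_{a,i}),
\]
with every $y_{a,i}$ nonzero. So the theorem reduces to showing that any such identity uses at least $r+1$ terms.

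I would prove this reduced statement by induction on $r$. The base case $r = 1$ is immediate: $(1,0) = c_1(1, y_1)$ forces $c_1 = 1$ and $y_1 = 0$, contradicting $y_1 \ne 0$, so $|T| \ge 2$. For the inductive step, I would slice the identity along the two coordinates of the last factor $(1, y_{a,r})$. Projecting on its first and second basis vectors gives, respectively,
\[
(1,0)^{\otimes (r-1)} \;=\; \sum_{a \in T} c_a \bigotimes_{i<r}(1, y_{a,i})
\qquad\text{and}\qquad
0 \;=\; \sum_{a \in T} c_a y_{a,r} \bigotimes_{i<r}(1, y_{a,i}).
\]
Assume for contradiction $|T| \le r$. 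The first identity together with the inductive hypothesis applied in dimension $r - 1$ forces $|T| \ge r$, hence $|T| = r$ and every $c_a$ is nonzero. Since every $y_{a,r}$ is nonzero, the second identity is a nontrivial linear relation with all-nonzero coefficients among the $r$ tensors $\bigotimes_{i<r}(1, y_{a,i})$. Solving that relation for one of these tensors and substituting into the first identity yields a representation of $(1,0)^{\otimes(r-1)}$ as a linear combination of at most $r-1$ tensors of the same allowed form (the values $y_{a,i}$ for $i<r$ are unchanged, hence still nonzero), contradicting the inductive hypothesis in dimension $r - 1$.

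The main subtlety I anticipate is the bookkeeping in the elimination step: after substitution, some of the new coefficients may collapse to zero and distinct $a$'s may produce identical tensors. However, both effects only reduce the effective term count, which only sharpens the contradiction with the inductive hypothesis, so this is not really a conceptual obstacle. The real content of the argument is the recognition that the question is equivalent to a ``restricted tensor rank'' problem for $(1,0)^{\otimes r}$ in which the trivial rank-one decomposition $(1,0)^{\otimes r} = \bigotimes_i (1,0)$ is forbidden (because each $y_{a,i}$ must be nonzero), and that forbidding it forces the rank to grow by exactly one in each new dimension.
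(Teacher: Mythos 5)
Your proof is correct, and your reduction is precisely the one underlying the paper's \emph{second} proof of Theorem~\ref{thm-lower1}: identifying the restriction of each character $\psi_a$ to $\{0,1\}^r$ with the rank-one tensor $\bigotimes_{i}(1,\omega_{m_i}^{a_i})$ rewrites the delta condition as $(1,0)^{\otimes r}=\sum_{a}c_a\bigotimes_i(1,y_{a,i})$ with all $y_{a,i}\neq 0$, and pairing this identity with $\bigotimes_i(1,T_i)$ recovers exactly the paper's multilinear polynomial identity $\sum_j b_j\prod_i(1+T_i\alpha_i(j))=\sum_j b_j$. Where you differ is in how the identity is finished off. The paper does it in one shot: if $t\le r$, substituting $T_i=-1/\alpha_i(i)$ for $1\le i\le t$ puts a vanishing factor into every summand on the left while the right side stays a nonzero constant. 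You instead induct on $r$: you slice the last tensor factor into its two coordinate projections, use the inductive hypothesis on the first slice to force $t=r$, and then eliminate one term via the second (null) slice --- valid because every $y_{j,r}$ is nonzero --- landing in dimension $r-1$ with at most $r-1$ nonzero-coefficient terms, contradicting the hypothesis again. Your worry about collapsing coefficients or coinciding rank-one tensors after the elimination is correctly dismissed, since both effects only shrink the term count. The one-shot substitution is more economical; your inductive version makes the ``restricted tensor rank of $(1,0)^{\otimes r}$'' reading explicit, which dovetails with the Alon--F\"uredi hyperplane-covering perspective the paper develops in its appendix and with the sumset-covering argument of the paper's first proof.
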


\paragraph{First proof of Theorem~\ref{thm-lower1}.}
Our first proof proceeds by constructing an auxiliary function $g$ supported on $\{0,1\}^r$. 
Multiplying $f$ by $g$ will produce a total delta function, and from this we can deduce 
a structural relation between the Fourier supports of $f$ and $g$.
The following lemma makes this precise.

\begin{lemma}\label{lem:supp-sum}
Let 
\[
f : \Z_{m_1} \times \cdots \times \Z_{m_r} \;\to\; \F
\]
be a delta function on $\{0,1\}^r$, where $m_1,\dots,m_r$ are (not necessarily distinct) positive integers.  
Suppose $g : \Z_{m_1} \times \cdots \times \Z_{m_r} \to \F$ is a function such that
\begin{itemize}
    \item $g(x) = 0$ for all $x \notin \{0,1\}^r$, 
    \item $g(0) \neq 0$.
\end{itemize}
Then
\[
\Supp(\widehat{f}) + \Supp(\widehat{g}) \;=\; \Z_{m_1} \times \cdots \times \Z_{m_r}.
\]
\end{lemma}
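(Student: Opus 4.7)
The plan is to multiply $f$ and $g$ pointwise, observe that the product collapses to a scalar multiple of the total delta function on $G = \Z_{m_1}\times\cdots\times\Z_{m_r}$, and then read off the conclusion from the convolution identity (Lemma~\ref{conv}) together with the Fourier expansion of the total delta function (Lemma~\ref{delta}).

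More precisely, set $h = f \cdot g$ and analyze $h$ pointwise. For $x \notin \{0,1\}^r$, the hypothesis on $g$ forces $g(x) = 0$, hence $h(x) = 0$. For $x \in \{0,1\}^r \setminus \{0\}$, the delta property of $f$ forces $f(x) = 0$, hence $h(x) = 0$. At the origin, $h(0) = f(0)\,g(0) = g(0) \neq 0$. Therefore $h = g(0)\cdot \delta$, where $\delta$ is the total delta function on $G$. By Lemma~\ref{delta}, $\widehat{\delta}$ is nonzero on every character, so $\Supp(\widehat{h}) = \widehat{G}$ (which we identify with $G$ via Lemma~\ref{lem:fourier-basis}).

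Now apply Lemma~\ref{conv} to $h = f\cdot g$: we have $\Supp(\widehat{h}) \subseteq \Supp(\widehat{f}) + \Supp(\widehat{g})$. Combined with the previous step, this gives the inclusion
\[
\widehat{G} \;\subseteq\; \Supp(\widehat{f}) + \Supp(\widehat{g}).
\]
The reverse inclusion is automatic since both summands sit inside $\widehat{G}$. Thus $\Supp(\widehat{f}) + \Supp(\widehat{g}) = \widehat{G} \cong \Z_{m_1}\times\cdots\times\Z_{m_r}$, as required.

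There is no real obstacle here: the only thing to check carefully is that $h$ genuinely equals $g(0)$ times the total delta function, which is a straightforward case split on whether $x$ lies in $\{0,1\}^r$ or not. Everything else is immediate from the previously recorded Fourier facts.
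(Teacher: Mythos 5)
Your proof is correct and follows essentially the same route as the paper: multiply $f$ and $g$ pointwise to obtain a nonzero scalar multiple of the total delta function, then combine Lemma~\ref{conv} with Lemma~\ref{delta} to conclude. The only difference is that you spell out the case split for $h(x)$ more explicitly, which the paper compresses into one sentence.
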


\begin{proof}
Let $h = f g$. By Lemma~\ref{conv}, 
\[
\Supp(\widehat{h}) \;\subseteq\; \Supp(\widehat{f}) + \Supp(\widehat{g}).
\]
Since $g$ is supported on $\{0,1\}^r$ and $f$ vanishes on $\{0,1\}^r \setminus \{0\}$ with $f(0)\neq 0$, we see that $h$ is supported only at $0$. Thus $h$ is a nonzero scalar multiple of the total delta function on the whole group, and so by Lemma \ref{delta},
\[
\Supp(\widehat{h}) = \Z_{m_1} \times \cdots \times \Z_{m_r}.
\]
This gives
\[
\Supp(\widehat{f}) + \Supp(\widehat{g}) \;=\; \Z_{m_1} \times \cdots \times \Z_{m_r}.
\]
\end{proof}

To apply Lemma~\ref{lem:supp-sum}, we need an explicit function $g$ satisfying 
the required conditions and whose Fourier support we can describe precisely. 
The following example provides such a construction.

\begin{example}\label{ex:comp-g} 
Let $\F$ be a field containing primitive $m_i$-th roots of unity $\omega_{m_i}$ for $1 \leq i \leq r$. Define 
\[ g \colon \Z_{m_1}\times \cdots \times \Z_{m_r} \;\longrightarrow\; \F \]
by 
\[ g(x_1,\dots,x_r) = \prod_{i=1}^r \; h_i(x_i),\] 
where, for each $i \in [r]$:
\[ h_i(x_i) \;=\; \omega_{m_i}^{x_i}\;\prod_{j=2}^{m_i-1} \bigl(\omega_{m_i}^{x_i}-\omega_{m_i}^j \bigr).\]

First we see that the values of $g$ are as needed for Lemma~\ref{lem:supp-sum}. 

Observe that:
\begin{itemize}
\item $h_i(0) \neq 0$,
\item $h_i(2) = h_i(3) = \ldots = h_i(m_i-1) = 0$.
\end{itemize}

Thus:
\begin{itemize}
    \item $g(0) \neq 0$,
    \item $g(x) = 0$ for all $x \not\in \{0,1\}^r$.
\end{itemize}

Now we understand the Fourier spectrum of $g$.

Observe that $h_i$ is a linear combination of the $m_i-1$ characters
\(\{\;\omega_{m_i}^{b_i x_i} : 1 \leq b_i \leq m_i-1 \;\}\).

This means that $g$ lies in the span of the functions $\Bigl\{\, \prod_{i=1}^r \omega_{m_i}^{b_i x_i} \;\;:\;\; 1 \leq b_i \leq m_i-1 \Bigr\} = \{ \psi_b \mid b  = (b_1, \ldots, b_r),  1 \leq b_i \leq m_i-1\}$. 
It follows that the Fourier support of $g$ is exactly
\[ \Supp(\widehat g) = \prod_{i=1}^r \{1,\dots,m_i-1\}, \] 
and therefore
\[ |\Supp(\widehat g)| \;=\; \prod_{i=1}^r (m_i-1). \] \end{example}

We are now ready to give the first proof of Theorem~\ref{thm-lower1}.\begin{proof}[Proof of Theorem~\ref{thm-lower1}]
By Lemma~\ref{lem:supp-sum} and Example~\ref{ex:comp-g}, we know that
\[
\Supp(\widehat{f}) + \prod_{i=1}^r \{1,\dots,m_i-1\} 
= \prod_{i=1}^r \{0,1,\dots,m_i-1\}.
\]

This implies the following property: for every element 
\[
u \in \prod_{i=1}^r \{0,1,\dots,m_i-1\},
\]
there must exist some $a \in \Supp(\widehat{f})$ such that 
$u_i \neq a_i$ for all coordinates $i$. 

Now assume, for the sake of contradiction, that 
\[
\Supp(\widehat{f}) = \{b^1,\dots,b^t\}, \qquad t \leq r.
\]
That is, suppose the Fourier support has size at most $r$.  

We will use a form of ``diagonalization" to derive a contradiction.
Construct a vector $u \in \prod_{i=1}^r \{0,1,\dots,m_i-1\}$ by setting
\[
u = (b^1_1, b^2_2, \dots, b^t_t, 0,\dots,0).
\]
In words: for $i \leq t$, the $i$-th coordinate of $u$ equals the $i$-th coordinate of $b^i$ (and the remaining coordinates of $u$ are set to $0$).  

By the above mentioned property, there must exist some $b \in \Supp(\widehat{f})$ 
such that $b_j \neq u_j$ for every coordinate $j$.  
But by construction, for each $i \leq t$ we have $u_i = b^i_i$, which forces 
$b \neq b^i$ for all $i=1,\dots,t$.  
This contradicts the assumption that $\Supp(\widehat{f}) = \{b^1,\dots,b^t\}$.  

Therefore, our assumption was false, and we conclude that
\[
|\Supp(\widehat{f})| \;\geq\; r+1.
\]
\end{proof}

\paragraph{Second proof of Theorem~\ref{thm-lower1}.}
Unlike the first proof, which relied on support-sum arguments, 
our second proof of Theorem~\ref{thm-lower1} uses a polynomial identity 
arising from the delta property of $f$ on $\{0,1\}^r$.
\begin{proof}
By assumption, there exists $\omega_{m_i} \in \F$ where each $\omega_{m_i}$ is an $m_i$-th root of unity.  
By Lemma~\ref{lem:fourier-basis}, we can expand $f$ as
\[
f(x_1,\dots,x_r) \;=\; \sum_{j=1}^t b_j \prod_{i=1}^r \omega_{m_i}^{\,a_j(i)\,x_i}.
\]

For convenience, define the vectors $\alpha_i \in \F^t$ by
\[
\alpha_i = \bigl(\alpha_i(1),\dots,\alpha_i(t)\bigr), \qquad
\alpha_i(j) := \omega_{m_i}^{\,a_j(i)} \quad \text{for } 1 \leq j \leq t.
\]
With this notation we can rewrite $f$ as
\[
f(x_1,\dots,x_r) \;=\; \sum_{j=1}^t b_j \prod_{i=1}^r \alpha_i(j)^{\,x_i}.
\]

Since $f$ is a delta function on $\{0,1\}^r$, we have
\[
f(x)=0 \quad \text{for all } x \in \{0,1\}^r \setminus \{0\}, 
\qquad f(0,\dots,0) \neq 0.
\]
Equivalently,
\[
\sum_{j=1}^t b_j \prod_{i \in B} \alpha_i(j) \;=\; 0
\quad \text{for every nonempty } B \subseteq [r],
\qquad \sum_{j=1}^t b_j \;\neq\; 0.
\]

Observe that, introducing formal variables $T_1,\dots,T_r$, expanding
\[
\prod_{i=1}^r \bigl(1 + T_i \alpha_i(j)\bigr)
\]
produces terms of the form $\prod_{i \in B} T_i \alpha_i(j)$ for subsets $B \subseteq [r]$. 
Therefore
\[
\sum_{j=1}^t b_j \prod_{i=1}^r (1 + T_i \alpha_i(j))
= \sum_{B \subseteq [r]} \Bigl( \prod_{i \in B} T_i \Bigr) 
   \Bigl( \sum_{j=1}^t b_j \prod_{i \in B} \alpha_i(j) \Bigr).
\]
By the vanishing property established above, every inner sum with $B \neq \emptyset$ equals zero, while for $B=\emptyset$ we get $\sum_{j=1}^t b_j \neq 0$. 
This proves the polynomial identity
\[
\sum_{j=1}^t b_j \prod_{i=1}^r (1 + T_i \alpha_i(j))
= \sum_{j=1}^t b_j.
\]
If $r \geq t$, we may choose $T_i = -1/\alpha_i(i)$ for $1 \leq i \leq t$.  
With this choice, each term in the left-hand side vanishes, so the entire left-hand side equals $0$, while the right-hand side is nonzero. This yields a contradiction.
\end{proof}

\paragraph{Second lower bound.}
We again use Lemma~\ref{lem:supp-sum} and Example~\ref{ex:comp-g}, but this time in combination with a counting argument, to derive a multiplicative lower bound.

\begin{theorem}\label{thm-lower2}
Let 
\[
f : \Z_{m_1} \times \cdots \times \Z_{m_r} \;\to\; \F
\]
be a delta function on $\{0,1\}^r \subseteq \Z_{m_1} \times \cdots \times \Z_{m_r}$,
where $m_1,\dots,m_r$ are (not necessarily distinct) positive integers.  
Then
\[
|\Supp(\widehat{f})| \;\geq\; \prod_{i=1}^r \frac{m_i}{\,m_i-1\,}.
\]
\end{theorem}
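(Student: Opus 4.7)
The plan is to combine Lemma~\ref{lem:supp-sum} and Example~\ref{ex:comp-g} (already used for the first lower bound) with a simple sumset cardinality estimate. Applying Lemma~\ref{lem:supp-sum} to the explicit $g$ constructed in Example~\ref{ex:comp-g}, whose Fourier support is exactly $\prod_{i=1}^r \{1,\dots,m_i-1\}$, I obtain the identity
\[
\Supp(\widehat{f}) \;+\; \prod_{i=1}^r \{1,\dots,m_i-1\}
\;=\; \prod_{i=1}^r \{0,1,\dots,m_i-1\}.
\]

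Next I would pass to cardinalities. The crucial elementary fact is that for any subsets $A,B$ of an abelian group, every element of $A+B$ can be written as $a+b$ with $a\in A$, $b\in B$, so $|A+B|\le |A|\cdot|B|$. Applying this to $A=\Supp(\widehat{f})$ and $B=\prod_i\{1,\dots,m_i-1\}$, the left-hand side above has cardinality at most $|\Supp(\widehat{f})|\cdot \prod_i (m_i-1)$, while the right-hand side is the entire group, of size $\prod_i m_i$. Rearranging yields
\[
|\Supp(\widehat{f})| \;\geq\; \prod_{i=1}^r \frac{m_i}{m_i-1},
\]
which is exactly the desired bound.

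I do not anticipate a serious obstacle: the entire proof reduces to invoking the already-proved covering statement and one line of arithmetic. The only mild point to be careful about is making sure the sumset estimate $|A+B|\le |A|\cdot|B|$ is invoked cleanly (it requires no assumptions beyond $A,B$ being finite), and that the identification of $\Supp(\widehat g)$ from Example~\ref{ex:comp-g} is cited in the right form. In particular, no new Fourier-analytic input beyond what was used for Theorem~\ref{thm-lower1} is needed; the improvement from an additive bound $r+1$ to a multiplicative bound $\prod_i m_i/(m_i-1)$ comes entirely from replacing the diagonalization argument with the sumset cardinality inequality.
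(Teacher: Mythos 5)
Your proposal is correct and follows exactly the same route as the paper: invoke Lemma~\ref{lem:supp-sum} with the auxiliary $g$ from Example~\ref{ex:comp-g} to get the covering identity $\Supp(\widehat{f}) + \prod_i\{1,\dots,m_i-1\} = \prod_i\{0,\dots,m_i-1\}$, then apply $|A+B|\le|A|\cdot|B|$ and rearrange. Nothing to add.
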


\begin{proof}
By Lemma~\ref{lem:supp-sum} and Example~\ref{ex:comp-g}, we have
\[
\Supp(\widehat{f}) + B = \prod_{i=1}^r \{0,1,\dots,m_i-1\},
\qquad 
B := \prod_{i=1}^r \{1,\dots,m_i-1\}.
\]
  
Since $A+B$ has size at most $|A| \cdot |B|$ for any sets $A,B$, we obtain
\[
|\Supp(\widehat{f})| \;\geq\; \frac{|\Supp(\widehat{f})+B|}{|B|}.
\]
Here $|\Supp(\widehat{f})+B| = \prod_{i=1}^r m_i$, while 
$|B| = \prod_{i=1}^r (m_i-1)$.  
Therefore
\[
|\Supp(\widehat{f})| \;\geq\; \frac{\prod_{i=1}^r m_i}{\prod_{i=1}^r (m_i-1)}
= \prod_{i=1}^r \frac{m_i}{m_i-1}.
\]
\end{proof}

\begin{corollary}
\label{same-lower}
If 
$f: \Z_m^r \rightarrow \F$ is delta on $\{0, 1\}^r$, then:
\begin{enumerate}[label=(\roman*)]
    \item  \(|\Supp(\widehat{f})| \geq r + 1\)
    \item  \(|\Supp(\widehat{f})| \geq (\frac{m}{m - 1})^r\)
\end{enumerate}
\end{corollary}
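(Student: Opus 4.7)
The plan is to observe that this corollary is simply the specialization of the two general lower bounds, Theorems~\ref{thm-lower1} and~\ref{thm-lower2}, to the uniform case $m_1 = m_2 = \cdots = m_r = m$. Under this specialization, $\Z_{m_1} \times \cdots \times \Z_{m_r}$ becomes $\Z_m^r$, the subset $\{0,1\}^r$ agrees in both settings, and the hypothesis on $\F$ containing the required roots of unity is identical to the standing assumption maintained throughout the paper. So no new ideas are needed; the corollary follows by direct substitution.

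For part (i), I would invoke Theorem~\ref{thm-lower1} with $m_i = m$ for all $i$. The conclusion of that theorem is $|\Supp(\widehat{f})| \geq r+1$, which is already independent of the $m_i$, so the bound transfers verbatim.

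For part (ii), I would invoke Theorem~\ref{thm-lower2}, which yields
\[
|\Supp(\widehat{f})| \;\geq\; \prod_{i=1}^r \frac{m_i}{m_i - 1}.
\]
Substituting $m_i = m$ in every factor collapses the product into $\left(\frac{m}{m-1}\right)^r$, which is exactly the desired bound.

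There is essentially no obstacle to overcome here: the corollary is stated precisely so as to be the ``all $m_i$ equal'' consequence of the two general theorems. The only very mild step to double-check is that the hypothesis on roots of unity in $\F$ for the one-variable modulus $m$ implies the hypothesis in Theorems~\ref{thm-lower1} and~\ref{thm-lower2} that $\F$ contains $m_i$-th roots of unity for each $i$; but since all the $m_i$ equal $m$, this is immediate. Thus the proof is a one-line deduction from the two theorems proved earlier in the section.
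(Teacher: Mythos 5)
Your proposal is exactly the paper's (implicit) proof: the corollary is stated without a separate argument precisely because it is the specialization of Theorems~\ref{thm-lower1} and~\ref{thm-lower2} to the uniform case $m_1 = \cdots = m_r = m$. The substitution and the check on the roots-of-unity hypothesis are both correct, so the proposal is complete and matches the paper's intent.
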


\paragraph{Upper bound.}
We next study upper bounds for the Fourier sparsity of delta functions on $\{0,1\}^r$. The following construction illustrates how to achieve small support when $m>r$, and will serve as a building block for further results in this section.

\begin{lemma}\label{ex:up-bnd}
Let $m > r$ and let $\omega \in \F$ be an $m$-th root of unity. 
Then there exists a function
\[
f : \Z_m^r \;\to\; \F
\]
which is a delta function on $\{0,1\}^r \subseteq \Z_m^r$ and has exactly $r+1$ nonzero Fourier coefficients.
\end{lemma}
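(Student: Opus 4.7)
The plan is to construct $f$ explicitly as a linear combination of $r+1$ carefully chosen characters, exploiting a permutation symmetry that collapses the $2^r-1$ delta constraints down to only $r$ equations on $r+1$ unknowns.

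First, I would pick $r$ distinct nonzero residues $\alpha_1,\ldots,\alpha_r\in\{1,\ldots,m-1\}$, which is possible precisely because $m>r$. Setting $\alpha_0=0$ and $\lambda_j=\omega^{\alpha_j}$, we obtain pairwise distinct values $\lambda_0=1,\lambda_1,\ldots,\lambda_r$ (using that $\omega$ is a primitive $m$-th root of unity). The ansatz is
$$f(x_1,\ldots,x_r)\;=\;\sum_{j=0}^{r} c_j\,\prod_{i=1}^{r}\omega^{\alpha_j x_i}\;=\;\sum_{j=0}^{r} c_j\,\lambda_j^{\,x_1+\cdots+x_r},$$
which is a linear combination of the $r+1$ distinct characters $\psi_{\alpha_j\cdot\mathbf{1}}$ for $j=0,1,\ldots,r$, where $\mathbf{1}=(1,\ldots,1)\in\Z_m^r$.

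The key payoff of this symmetric choice is that $f(x)$ depends only on $x_1+\cdots+x_r$. Thus for a Boolean input $e_I\in\{0,1\}^r$ the value $f(e_I)$ depends only on $k=|I|$, and the $2^r-1$ delta conditions together with $f(0)=1$ collapse into the $r+1$ equations
$$\sum_{j=0}^{r} c_j\,\lambda_j^{\,k}\;=\;\begin{cases}1,&k=0,\\0,&k=1,2,\ldots,r.\end{cases}$$
The coefficient matrix is the $(r+1)\times(r+1)$ Vandermonde matrix on the pairwise distinct nodes $\lambda_0,\ldots,\lambda_r$, hence invertible over $\F$, so a unique solution $(c_0,\ldots,c_r)$ exists and yields an $f$ whose Fourier support is contained in $\{\alpha_j\cdot\mathbf{1}:0\le j\le r\}$.

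To conclude that the Fourier sparsity equals $r+1$ and not less, I would verify each $c_j\neq 0$. Dually, the $c_j$ are precisely the Lagrange weights that recover the value at $0$ of any polynomial $p\in\F[z]$ of degree $\le r$ from its values at the nodes $\lambda_j$, giving the closed form
$$c_j\;=\;\prod_{i\neq j}\frac{-\lambda_i}{\lambda_j-\lambda_i}.$$
Since each $\lambda_i$ is a nonzero root of unity and the $\lambda_i$ are pairwise distinct, every $c_j$ is nonzero. The only creative ingredient is the symmetric ansatz that reduces the exponentially large constraint system to a Vandermonde one of size $r+1$; everything else is routine linear algebra, and I do not anticipate any serious obstacle.
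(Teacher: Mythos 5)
Your construction is essentially the same as the paper's: the paper defines
\[
f(x)\;=\;\prod_{i=1}^{r}\bigl(\omega^{x_1+\cdots+x_r}-\omega^i\bigr),
\]
which, setting $Z=\omega^{\sum x_i}$, is just the interpolating polynomial $\prod_{i=1}^r(Z-\omega^i)$ (up to normalization), so the core idea -- use a symmetric ansatz depending only on $\sum x_i$, thereby collapsing the $2^r-1$ Boolean constraints into a one-variable interpolation problem at the $r+1$ distinct nodes $\omega^0,\ldots,\omega^r$ -- is identical. The only real difference is how the two arguments finish. The paper writes the polynomial in factored form (so the vanishing and $\deg\le r$ are immediate) and then simply asserts that $f$ has \emph{exactly} $r+1$ nonzero Fourier coefficients; this is a priori only an upper bound of $r+1$, and the matching ``$\ge r+1$'' is supplied by the separately-proved lower bound (Corollary~\ref{same-lower}). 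You instead solve the Vandermonde system explicitly and observe that the solution is given by Lagrange weights evaluated at $0$, namely $c_j=\prod_{i\neq j}\frac{-\lambda_i}{\lambda_j-\lambda_i}$, each of which is a product of nonzero quantities since the $\lambda_i$ are distinct nonzero roots of unity. This makes the exactness of the sparsity self-contained, which is a small improvement in rigor over the paper's phrasing. (One minor point common to both: the lemma as stated says ``$m$-th root of unity,'' but both arguments implicitly require $\omega$ to be \emph{primitive} so that $\omega^0,\ldots,\omega^r$ are pairwise distinct; the paper should also normalize by $f(0)$ to literally satisfy $f(0)=1$.) Your proof is correct.
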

    
\begin{proof}
Define
\[
f(x_1,\dots,x_r) \;=\; \prod_{i=1}^{r} \bigl(\,\omega^{\,x_1+\cdots+x_r} - \omega^i \bigr).
\]
We first check the delta property.  
For any $x \in \{0,1\}^r \setminus \{0\}$ we have $1 \leq x_1+\cdots+x_r \leq r$, hence one of the factors vanishes, and thus $f(x)=0$.  
On the other hand, $f(0,\dots,0) = \prod_{i=1}^r (1 - \omega^i) \neq 0$.  
Therefore $f$ is indeed a delta function on $\{0,1\}^r$.

Finally, observe that $f$ can be expressed as a linear combination of the $r+1$ characters
\[
\psi_{(i,\dots,i)}(x_1,\dots,x_r) \;=\; \omega^{\,i(x_1+\cdots+x_r)}, 
\qquad 0 \leq i \leq r.
\]
Thus $f$ has exactly $r+1$ nonzero Fourier coefficients, as claimed.
\end{proof}

\begin{corollary}
When $m > r$, the above claim gives a construction of a delta function with Fourier sparsity $r+1$. 
By Corollary~\ref{same-lower}, we also have the lower bound $r+1$. 
Thus, in this case, the upper and lower bounds coincide, yielding a tight result.
\end{corollary}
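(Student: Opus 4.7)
The plan is essentially to observe that this corollary is a direct juxtaposition of two already-proved results, so the ``proof'' amounts to verifying that the two bounds refer to the same quantity and that the hypothesis $m>r$ is consistent with both. I would first invoke Lemma~\ref{ex:up-bnd}, which under the assumption $m>r$ exhibits an explicit $f:\Z_m^r\to\F$ that is delta on $\{0,1\}^r$ and satisfies $|\Supp(\widehat f)|=r+1$. This supplies the upper bound $\min_f |\Supp(\widehat f)|\le r+1$, where the minimum ranges over all delta functions on $\{0,1\}^r\subseteq \Z_m^r$.

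Next I would invoke part~(i) of Corollary~\ref{same-lower}, which asserts that for \emph{every} delta function $f:\Z_m^r\to\F$ on $\{0,1\}^r$, we have $|\Supp(\widehat f)|\ge r+1$. Note that this lower bound holds for all $m$ (in particular, the assumption $m>r$ is not required on the lower-bound side), so it applies in our regime without modification. Combining the two inequalities gives $\min_f |\Supp(\widehat f)|=r+1$, which is the tightness assertion.

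There is no real obstacle here: the only thing to be slightly careful about is to verify that Lemma~\ref{ex:up-bnd} and Corollary~\ref{same-lower} are stated with compatible hypotheses on $\F$ (both require that $\F$ contain a primitive $m$-th root of unity, which is indeed assumed throughout the section), and that both results consider delta functions on the same set $\{0,1\}^r\subseteq\Z_m^r$. Once these compatibilities are noted, the corollary follows immediately, and no further argument is needed.
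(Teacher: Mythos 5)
Your proposal is correct and is essentially the argument the paper intends: the corollary is just the conjunction of Lemma~\ref{ex:up-bnd} (upper bound when $m>r$) and Corollary~\ref{same-lower}(i) (lower bound for all $m$), and you correctly note the compatibility of hypotheses that makes the two bounds comparable. Nothing further is needed.
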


\begin{theorem}\label{claim:upper-general}
Suppose $m - 1$ divides $r$. Then, there exists a function
\[
f : \Z_m^r \;\to\; \F
\]
which is a delta function on $\{0,1\}^r$ and satisfies
\[
|\Supp(\widehat{f})| \;\leq\; m^{\,\tfrac{r}{m-1}}.
\]
\end{theorem}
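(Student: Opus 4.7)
The plan is to bootstrap from Lemma~\ref{ex:up-bnd} by a tensor-product (``tensoring'') construction. Write $k = r/(m-1)$, which is a positive integer by hypothesis. Partition the $r$ coordinates into $k$ blocks of size $m-1$, so that we can identify
\[
\Z_m^r \;\cong\; \bigl(\Z_m^{m-1}\bigr)^k, \qquad \{0,1\}^r \;\cong\; \bigl(\{0,1\}^{m-1}\bigr)^k.
\]
Since $m > m-1$, Lemma~\ref{ex:up-bnd} provides a function $g : \Z_m^{m-1} \to \F$ which is delta on $\{0,1\}^{m-1}$ and has exactly $m$ nonzero Fourier coefficients. (If needed, rescale $g$ so that $g(0)=1$.)

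Next, define $f : \Z_m^r \to \F$ by
\[
f\bigl(x^{(1)}, \ldots, x^{(k)}\bigr) \;=\; \prod_{j=1}^{k} g\bigl(x^{(j)}\bigr),
\]
where $x^{(j)} \in \Z_m^{m-1}$ is the $j$-th block. For $x \in \{0,1\}^r \setminus \{0\}$, at least one block $x^{(j)}$ is a nonzero element of $\{0,1\}^{m-1}$, so $g(x^{(j)}) = 0$ and hence $f(x) = 0$; for $x=0$ every block is zero, so $f(0) = g(0)^k \neq 0$. Thus $f$ is a delta function on $\{0,1\}^r$.

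Finally, I would compute the Fourier sparsity using Lemma~\ref{lem:fourier-basis}: characters of the product group factor as products of characters of each block, so if $g = \sum_{a \in \Supp(\widehat g)} \widehat g(a)\,\psi_a$, then
\[
f \;=\; \prod_{j=1}^{k} \sum_{a^{(j)} \in \Supp(\widehat g)} \widehat g\bigl(a^{(j)}\bigr)\,\psi_{a^{(j)}}
\;=\; \sum_{(a^{(1)},\ldots,a^{(k)})} \left(\prod_{j=1}^{k} \widehat g\bigl(a^{(j)}\bigr)\right) \psi_{(a^{(1)},\ldots,a^{(k)})}.
\]
Since the $\psi_{(a^{(1)},\ldots,a^{(k)})}$ are distinct characters of $\Z_m^r$ and the coefficient is nonzero exactly when each $a^{(j)} \in \Supp(\widehat g)$, we get $|\Supp(\widehat f)| = |\Supp(\widehat g)|^k = m^{k} = m^{r/(m-1)}$. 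There is no real obstacle here; the only thing to be careful about is the bookkeeping of which characters appear and to invoke the factorization of characters on product groups (immediate from Lemma~\ref{lem:fourier-basis}).
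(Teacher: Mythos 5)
Your proposal is correct and follows essentially the same approach as the paper: partition the coordinates into $r/(m-1)$ blocks of size $m-1$, apply Lemma~\ref{ex:up-bnd} on each block, and take the product. The only minor difference is that you track the Fourier coefficients of the tensor product explicitly to get exact equality $|\Supp(\widehat f)| = m^{r/(m-1)}$, whereas the paper simply invokes Corollary~\ref{product-supp} to get the upper bound, which is all the theorem requires.
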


\begin{proof}
Partition the index set $\{1,\dots,r\}$ into 
\[
\ell \;=\; \frac{r}{m-1}
\]
disjoint subsets, denoted $A_1,\dots,A_\ell$, each of size $m-1$.  
For each $1 \leq k \leq \ell$, define a function $f_k$ depending only on the coordinates indexed by $A_k$, as in Lemma~\ref{ex:up-bnd}:
\[
f_k\bigl((x_j)_{j \in A_k}\bigr) 
\;=\; \prod_{i=1}^{m-1}\Bigl(\,\omega^{\sum_{j \in A_k} x_j} - \omega^i \Bigr),
\]
where $\omega$ is a fixed primitive $m$-th root of unity.  
By Lemma~\ref{ex:up-bnd}, each $f_k$ is delta on 
$\{0,1\}^{A_k}$
and satisfies
\[
|\Supp(\widehat{f_k})| \;=\; m.
\]

Now define the overall function
\[
f(x_1,\dots,x_r) \;=\; \prod_{k=1}^{\ell} f_k\bigl((x_j)_{j \in A_k}\bigr).
\]
Clearly, $f$ is delta on $\{0,1\}^r$.  
Applying Corollary ~\ref{product-supp} , we obtain
\[
|\Supp(\widehat{f})| 
\;\leq\; \prod_{k=1}^\ell |\Supp(\widehat{f_k})|
\;=\; m^\ell 
\;=\; m^{\,\tfrac{r}{m-1}}.
\]
\end{proof}
\subsection{Bounds for Delta Functions on $\{-1,0,1\}^r$}\label{sec:bnd-trinary}
We now turn to delta functions on $\{-1,0,1\}^r$. Using a general structural claim, we derive lower bounds on their Fourier sparsity. In particular, we show that such functions must have support size at least $2^r$.

\begin{claim}\label{claim:bnd-diff}
Let $f : G \to \F$ be a delta function on a set $B \subseteq G$ with $0 \in B$.  
Suppose $D \subseteq B$ is such
$$D-D \subseteq B,$$
where $D-D$ is the difference set $\{ d_1 - d_2 \mid d_1, d_2 \in D \}$.
Then
\[
|\Supp(\widehat{f})| \;\geq\; |D|.
\]
\end{claim}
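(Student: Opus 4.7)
The plan is to apply the linear algebra method of combinatorics: I will exhibit a $|D| \times |D|$ matrix that is the identity (hence has rank $|D|$) and simultaneously admits a factorization through a space of dimension $|\Supp(\widehat{f})|$. Combining these two observations immediately forces $|\Supp(\widehat{f})| \geq |D|$.

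Concretely, I would define the $D \times D$ matrix $M$ by $M_{d_1, d_2} := f(d_1 - d_2)$. For $d_1 = d_2$, $M_{d,d} = f(0) = 1$ by the delta property. For $d_1 \neq d_2$, the hypothesis $D - D \subseteq B$ places $d_1 - d_2$ in $B \setminus \{0\}$, so $f(d_1 - d_2) = 0$. Therefore $M = I_{|D|}$, and in particular $\mathrm{rank}(M) = |D|$.

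Next I would use Fourier inversion, $f(x) = \sum_{\psi \in \Supp(\widehat{f})} \widehat{f}(\psi)\, \psi(x)$, together with the homomorphism identity $\psi(d_1 - d_2) = \psi(d_1)\,\psi(d_2)^{-1}$, to rewrite
\[ M_{d_1, d_2} \;=\; \sum_{\psi \in \Supp(\widehat{f})} \widehat{f}(\psi)\, \psi(d_1)\, \psi(d_2)^{-1}. \]
This exhibits $M$ as a matrix product $M = A \cdot C$, where $A$ is the $|D| \times |\Supp(\widehat{f})|$ matrix with entries $A_{d, \psi} = \psi(d)$, and $C$ is the $|\Supp(\widehat{f})| \times |D|$ matrix with entries $C_{\psi, d} = \widehat{f}(\psi)\, \psi(d)^{-1}$. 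Hence $|D| = \mathrm{rank}(M) \leq \mathrm{rank}(A) \leq |\Supp(\widehat{f})|$, as desired.

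I do not foresee a real obstacle here; the argument is a clean application of the rank bound, with the hypothesis $D - D \subseteq B$ used precisely to make the off-diagonal entries of $M$ vanish, and the group-homomorphism property of characters used to separate the variables $d_1$ and $d_2$ so the factorization goes through. Specializing this to the theorem stated above the claim, with $B = \{-1,0,1\}^r$ and $D = \{0,1\}^r$ (which satisfies $D - D = \{-1,0,1\}^r = B$), recovers the $2^r$ lower bound on Fourier sparsity for delta functions on $\{-1,0,1\}^r$.
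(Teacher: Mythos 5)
Your proposal is correct and is essentially the paper's argument, recast in matrix language: the paper shows the translates $f_d$, $d\in D$, are linearly independent yet all lie in the span of the characters in $\Supp(\widehat f)$, which is exactly your factorization $M = AC$ of the identity $D\times D$ matrix through a rank-$|\Supp(\widehat f)|$ intermediary. No meaningful difference in route or content.
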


\begin{proof}
Write $f$ as a linear combination of characters supported on some set 
$T \subseteq \widehat{G}$:
\[
f(x) \;=\; \sum_{a \in T} c_a \psi_a(x).
\]

For $d \in G$, define the translate $f_d: G \to \F$ by:
\[
f_d(x) := f(x- d).
\]

Then we compute the Fourier expansion of $f_d$ as follows:
$$f_d(x) = \sum_{a \in T} c_a \psi_a(x-d) = \sum_{a \in T} \left(c_a \psi^{-1}_a(d) \right)\psi_a(x),$$
and thus:
\[
f_d \in \Span\{\psi_a : a \in T\}.
\]

We now show that the functions $\{f_d : d \in D \}$ are all linearly independent.
Enumerate $D = \{d_1,\dots,d_{|D|}\}$.  
For each $d \in D$, consider the evaluation vector
\[
v_d := \bigl(f_d(d_1),\dots,f_d(d_{|D|})\bigr) \;\in\; \F^{|D|}.
\]

By construction,
\[
f_d(d_i) = f(d-d_i).
\]
If $d \neq d_i$, then $d-d_i \in B \setminus \{0\}$, so $f(d-d_i)=0$.  
If $d = d_i$, then $f_d(d_i)  = f(0) \neq 0$.  
Thus $v_d$ is a nonzero scalar multiple of the $i$-th standard basis vector $e_i$.
Hence the vectors $\{v_d : d \in D\}$ are linearly independent.

Therefore, the functions $\{f_d : d \in D\}$ are linearly independent, all 
lying inside $\Span\{\psi_a : a \in T\}$.  
It follows that
\[
|T| \;\geq\; |D|.
\]

Finally, since $T = \Supp(\widehat{f})$, we conclude that 
$|\Supp(\widehat{f})| \geq |D|$, as claimed.
\end{proof}


\begin{corollary}
If
$f: \Z_{m_1} \times ... \times \Z_{m_r} \rightarrow \F$ is delta on $\{-1, 0, 1\}^r$, then we have:
$$|\Supp(\widehat{f})| \geq 2^r$$    
\end{corollary}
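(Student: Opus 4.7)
The plan is to apply Claim~\ref{claim:bnd-diff} directly, so the entire task reduces to exhibiting a large enough subset $D$ of $B = \{-1,0,1\}^r$ whose difference set stays inside $B$. The natural candidate is $D = \{0,1\}^r$, which has size exactly $2^r$ and is clearly contained in $B$, and $0 \in B$ holds trivially.

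First I would verify the key closure condition $D - D \subseteq B$. This is immediate: for any two vectors $u,v \in \{0,1\}^r$, each coordinate of $u-v$ lies in $\{-1,0,1\}$, so $u - v \in \{-1,0,1\}^r = B$. In fact one even has equality $D - D = B$, though only the inclusion is needed.

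With this verification done, Claim~\ref{claim:bnd-diff} (applied with $G = \Z_{m_1} \times \cdots \times \Z_{m_r}$, $B = \{-1,0,1\}^r$, and $D = \{0,1\}^r$) immediately yields $|\Supp(\widehat{f})| \geq |D| = 2^r$, which is exactly what we want.

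There is essentially no obstacle here since the result is a direct corollary; the only ``thinking" needed is recognizing that $\{0,1\}^r$ is the right choice of $D$. It is worth pointing out the conceptual meaning: the hypotheses of Claim~\ref{claim:bnd-diff} are satisfied by $\{0,1\}^r$ precisely because $B$ is symmetric enough to absorb all pairwise differences within the Boolean hypercube, and this is what forces the sparsity of $f$ to remain at least $2^r$ regardless of the size of $m_i$. This also explains the sharp contrast with the $\{0,1\}^r$ case, where no nontrivial $D$ is available inside $B$ and the bound degrades with $m$.
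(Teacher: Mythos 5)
Your proof is correct and is essentially identical to the paper's: both apply Claim~\ref{claim:bnd-diff} with $D = \{0,1\}^r$ and $B = \{-1,0,1\}^r$, check that differences of Boolean vectors land in $B$, and conclude $|\Supp(\widehat{f})| \geq |D| = 2^r$.
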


\begin{proof}
Apply the previous claim with 
$B=\{-1,0,1\}^r$ and 
$D=\{0, 1\}^r$.  
For distinct $d_1,d_2 \in D$, the difference $d_1-d_2$ lies in $B \setminus \{0\}$, 
so the conditions of the claim are satisfied.  
Since $|D|=2^r$, the result follows.
\end{proof}

\section{From Fourier Sparsity to Matching Vector PIRs}
\label{sec:pir}

In this section, we first establish the connections between the Fourier sparsity of delta functions and $S$-decoding polynomials. We then recall how the sparsity of $S$-decoding polynomials is related to Matching vector PIRs. Combining this with our results from the previous section we derive new bounds on the number of servers required in such PIR schemes.

\subsection{Fourier Sparsity and $S$-decoding Polynomials}\label{poly-to-Fourier}

For Matching Vector PIRs, there are two key ingredients: a large $S$-matching vector family in $\Z_m^k$, and a sparse $S$-decoding polynomial, where $S \subseteq \Z_m$.

For all the superpolynomially large $S$-matching vector families that we know\footnote{They all come from the Barrington-Biegel-Rudich method involving the Chinese remainder theorem.}, $S$ has to have a special form: it must essentially be a canonical set for $m$, defined below\footnote{More precisely, $S$ must contain a product set of the form $S_1 \times S_2 \ldots \times S_r$, where  $\S_i \subseteq \Z_{p_i}$ has size at least $2$, and $\Z_m$ is viewed as $\prod_i \Z_{p_i}$ for distinct primes $p_1, \ldots, p_r$. $S$-decoding polynomials for such $S$ easily translate into $S$-decoding polynomials for the canonical $S$.}.

\begin{definition}[Canonical set]
\label{def:canonicalset}
Let 
$m \in \Z$
be a positive integer. We define the canonical set for 
$m$
to be the following set:
$$S = \{x \in \Z_m \ s.t. \ x^2 \equiv x \ mod \ m\}$$

Note that when $m = p_1 \cdot \cdot \cdot p_r$ is a product of $r$ distinct primes, the canonical set consists of the following $2^r$ values:
$$S = \{s \in \Z_m,  \forall p_i \ s \ mod \ p_i \in \{0, 1\} \}$$
For example, for $m = 6$ we get:
$$S = \{0, 1, 3, 4\}$$
which are precisely the
four values being
$0$ or $1$ 
mod $2, 3$.
\end{definition}

\begin{definition}[$t$-sparse $S$-decoding polynomial]
Let $S \subseteq \Z_m$ be a set containing $0$,
$\F$ be a field containing an element $\gamma_m$
which is a primitive $m$-th root of $1$: namely
$\gamma_m^m = 1$ 
and 
$\gamma_m^i \neq 1$
for 
$i = 1, 2,..., m - 1 $.

A polynomial 
$P(Z) \in \F[Z]$
is called an 
$t$-sparse $S$-decoding polynomial if $P$ has $t$ monomials, and the following conditions hold:
\begin{itemize}
    \item $\forall s \in S \setminus \{0\}: \ P(\gamma_m^s) = 0$
    \item $P(\gamma_m^0) = P(1) = 1$
\end{itemize}
\end{definition}

For $m = p_1 \times...\times p_r$, for distinct primes $p_1,..., p_r$, we use $\phi$ to denote the Chinese remainder isomorphism:
$$\phi: \Z_m \rightarrow \Z_{p_1}\times...\times\Z_{p_r}$$
$$\phi(a) = (a(1),..., a(r)), \ a(i) = a \ mod \ p_i$$
Also, we use
$\phi(a)_i$ to denote $a(i) = a \ mod \ p_i$.

The next claim shows the relation between Fourier sparsity and $S$-decoding polynomials for canonical sets $S$.

\begin{claim}
Let
$m = p_1...p_r$ be a product of $r$ distinct primes and $S$ be the canonical set for $m$,
$\F$ be a field containing an
$m$-th root of unity. Then, having a 
$t$-sparse 
$S$-decoding polynomial in $\F[Z]$ is equivalent to having a function:
$$f: \Z_{p_1} \times ... \times \Z_{p_r} \rightarrow \F$$
such that
$$f|_{\{0, 1\}^r}$$
is a delta function,
$|\Supp(\widehat{f})| = t.$
\end{claim}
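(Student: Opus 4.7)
The plan is to establish the equivalence via the Chinese Remainder Theorem applied both to $\Z_m$ itself and to its dual character group. The CRT gives an isomorphism $\phi: \Z_m \to \prod_{i=1}^r \Z_{p_i}$, and by definition of the canonical set (as elements whose reduction mod each $p_i$ lies in $\{0,1\}$) this isomorphism restricts to a bijection $S \leftrightarrow \{0,1\}^r$ sending $0$ to $0$. That immediately aligns the delta conditions on the two sides: $P(\gamma_m^s) = 0$ for $s \in S \setminus \{0\}$ and $P(1) = 1$ become $f(x) = 0$ for $x \in \{0,1\}^r \setminus \{0\}$ and $f(0) = 1$.

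The heart of the argument is a dual identification on the Fourier side. Set $\omega_i := \gamma_m^{m/p_i}$, which is a primitive $p_i$-th root of unity in $\F$, and let $v_i := (m/p_i)^{-1} \bmod p_i$. Writing $a \equiv \sum_i (m/p_i) v_i a_i \pmod{m}$, one computes
\[
\gamma_m^{ea} \;=\; \prod_{i=1}^r \omega_i^{\,e v_i a_i}
\qquad\text{for } \phi(a) = (a_1,\dots,a_r).
\]
Thus the character of $\Z_m$ indexed by $e \in \Z_m$ coincides, under $\phi$, with the character $\psi_{b(e)}$ of $\prod_i \Z_{p_i}$ where $b(e)_i := e v_i \bmod p_i$. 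Since each $v_i$ is invertible mod $p_i$, the map $e \mapsto b(e)$ is a bijection $\Z_m \to \prod_i \Z_{p_i}$; in particular, distinct exponents $e \bmod m$ produce distinct characters of the product group.

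With this dictionary in hand, both directions are direct. Given a $t$-sparse $S$-decoding polynomial $P(Z) = \sum_{j=1}^t c_j Z^{e_j}$, first reduce every $e_j$ modulo $m$ (this leaves $P(\gamma_m^a)$ unchanged and the $t$ monomials still distinct, since any coincidence would contradict $t$-sparsity after combining coefficients, so WLOG the $e_j$'s are distinct in $\Z_m$). Define $f(\phi(a)) := P(\gamma_m^a)$; the delta conditions transfer as noted above, and expanding monomial by monomial yields
\[
f \;=\; \sum_{j=1}^t c_j\, \psi_{b(e_j)},
\]
which is a sum of $t$ pairwise distinct characters, so $|\Supp(\widehat{f})| = t$. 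Conversely, given $f$ delta on $\{0,1\}^r$ with $|\Supp(\widehat f)| = t$, write $f = \sum_{j=1}^t d_j \psi_{b^{(j)}}$, invert the bijection $e \mapsto b(e)$ to obtain distinct exponents $e_j \in \Z_m$ with $b(e_j) = b^{(j)}$, and set $P(Z) := \sum_{j=1}^t d_j Z^{e_j}$; the delta conditions on $f$ translate back to the $S$-decoding conditions on $P$, and $P$ has exactly $t$ monomials.

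The only step that requires genuine care is the dual identification: verifying that $\phi$ induces a compatible group isomorphism of characters and that a single monomial $Z^e$ evaluates to a single Fourier character of $\prod_i \Z_{p_i}$, not a linear combination of several. The minor bookkeeping obstacle is handling polynomials whose exponents repeat modulo $m$; this is resolved up front by reducing $P$ modulo $Z^m - 1$ and observing that this operation can only decrease the monomial count, so without loss of generality it preserves $t$-sparsity. After that, the two constructions are mutually inverse and sparsity matches on the nose.
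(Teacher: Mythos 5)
Your argument is correct and follows essentially the same route as the paper's: decompose $\gamma_m$ via CRT into a product of primitive $p_i$-th roots of unity, observe that the canonical set $S$ maps bijectively onto $\{0,1\}^r$ under $\phi$, and match each monomial $Z^e$ of the decoding polynomial with a single Fourier character of $\prod_i \Z_{p_i}$ so that the delta conditions and sparsity align. You are somewhat more careful than the paper on two bookkeeping points that it glosses over: you explicitly verify that $e \mapsto b(e)$ is a bijection on exponents (so distinct monomials really do give distinct characters), and you handle the possibility of exponents coinciding mod $m$ by reducing $P$ modulo $Z^m-1$ first; both are worthwhile clarifications but do not change the substance of the argument.
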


\begin{proof}
Let
$P(Z) \in \F[Z]$
be an
$S$-decoding polynomial,
$$P(Z) = \sum_{j = 1}^{t} b_j Z^{a_j}$$
Let
$\gamma_m \in \F$
be a primitive 
$m$-th
root of unity. Then, since $p_i$'s are pairwise coprime, we can write it as:
$$\gamma_m = \prod_{i = 1}^r \omega_{p_i}$$
for $\omega_{p_i} \in \F$, $1 \leq i \leq r$, with each $\omega_{p_i}$ a primitive $p_i$-th root of unity. Therefore,
$$\gamma_m^s = (\prod_{i = 1}^r \omega_{p_i})^s = \prod_{i = 1}^r \omega_{p_i}^{\phi(s)_i}$$
\\
Observe that:
$$P(\gamma_m^s) = 0 \iff \sum_{j=1}^t b_j(\prod_{i = 1}^r \omega_{p_i}^{\phi(s)_i})^{a_j} = 0 \iff \sum_{j=1}^t b_j\prod_{i = 1}^r \omega_{p_i}^{\phi(a_j)_i \cdot \phi(s)_i} = 0$$
Now note that 
\[
S = \{s \in \Z_m: s \ mod \ p_i \in \{0, 1\}\} = \{s \in \Z_m: \phi(s) \in \{0, 1\}^r\}
\]
Hence, having such an $S$-decoding polynomial is equivalent to constructing a function $f$ such that:
$$f(x_1,..., x_r) = \sum_{j=1}^t b_j\prod_{i = 1}^r \omega_{p_i}^{a_j(i)x_i},$$
and:
$$f|_{(\phi(S)\setminus \{0\})} = 0,$$
$$f(0) \neq 0.$$
By Lemma \ref{lem:fourier-basis}, $ \prod_{i=1}^r \omega_{p_i}^{a_j(i) x_i} = \psi_{a_j}(x)$,
and so the above expression for $f$ equals:
$$ f(x) = \sum_{j=1}^t b_j \psi_{a_j}(x).$$

Thus having such an $S$-decoding polynomial is exactly equivalent to having a function 
$f: \Z_m \cong \Z_{p_1} \times...\times \Z_{p_r} \rightarrow \F$ such that
$f|_{\phi(S)}$ is delta and $|\Supp(\widehat{f})| = |\{b_j : j \in [t]\}| = t$.

\end{proof}

By applying Theorem \ref{thm-lower1} to the above claim, we obtain the following corollary:
\begin{corollary}
\label{cor:poly-lower}
Let $m = p_1 \times... \times p_r$, $S$ be the canonical set for $m$, $P(Z) \in \F[Z]$ be an $S$-decoding polynomial. Then, the number of monomials of $P$ is at least $r + 1$.       
\end{corollary}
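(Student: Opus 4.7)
The plan is to reduce the corollary to Theorem~\ref{thm-lower1} via the equivalence between $S$-decoding polynomials and Fourier-sparse delta functions that was established in the preceding claim. That claim shows that a $t$-sparse $S$-decoding polynomial $P(Z) \in \F[Z]$ is equivalent to a function $f \colon \Z_{p_1} \times \cdots \times \Z_{p_r} \to \F$ whose restriction to $\{0,1\}^r$ is a delta function and whose Fourier support has size exactly $t$. Once this translation is in place, the corollary is immediate from the $r+1$ lower bound for delta functions on the Boolean hypercube.

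First I would invoke the preceding claim to pass from the $t$-sparse $S$-decoding polynomial $P$ to an associated function $f$ on $\prod_{i=1}^r \Z_{p_i}$, delta on $\{0,1\}^r$, with $|\Supp(\widehat{f})| = t$. Here I would briefly note the required root-of-unity hypothesis: since $\F$ contains a primitive $m$-th root of unity $\gamma_m$ and $p_i \mid m$, the element $\gamma_m^{m/p_i}$ is a primitive $p_i$-th root of unity, so $\F$ contains all the $\omega_{p_i}$ needed to apply Theorem~\ref{thm-lower1} with the group $\Z_{p_1} \times \cdots \times \Z_{p_r}$.

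Next I would apply Theorem~\ref{thm-lower1} directly to $f$, yielding
\[
t \;=\; |\Supp(\widehat{f})| \;\geq\; r + 1.
\]
Since $t$ is the number of monomials of $P$, this is exactly the desired lower bound.

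I do not expect any genuine obstacle here: the corollary is essentially a packaging step, since the claim does the dictionary between polynomial sparsity and Fourier sparsity, and Theorem~\ref{thm-lower1} does the combinatorial work. The only small point to verify carefully is that the canonical set $S \subseteq \Z_m$ corresponds under the Chinese remainder isomorphism $\phi$ exactly to $\{0,1\}^r \subseteq \prod_i \Z_{p_i}$, so that the delta-on-$(\phi(S)\setminus\{0\})$ condition of the claim is literally the delta-on-$\{0,1\}^r$ hypothesis of Theorem~\ref{thm-lower1}; this is precisely the content of Definition~\ref{def:canonicalset}.
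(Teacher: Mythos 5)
Your proposal is correct and matches the paper's proof exactly: the paper also obtains this corollary by combining the preceding claim (equivalence of $t$-sparse $S$-decoding polynomials with $t$-Fourier-sparse delta functions on $\{0,1\}^r \subseteq \prod_i \Z_{p_i}$) with Theorem~\ref{thm-lower1}. Your additional remark that $\gamma_m^{m/p_i}$ furnishes the needed primitive $p_i$-th roots of unity is a nice bit of care that the paper leaves implicit.
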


\subsection{Applications to Matching Vector PIRs}

We now discuss the relationship between $S$-decoding polynomials and Matching Vector based PIRs.

\begin{definition}[$S$-Matching Vector Family]
Let
$S \subset \Z_m, 0 \in S$
and
$\Fam = \left( \U, \V \right)$
where 
$\U = \left(u_1,..., u_n \right), \V = \left(u_1,..., u_n \right)$ with  $u_i, v_i \in \Z_m^k$.
Then 
$\Fam$
is said to be an 
$S$-matching 
vector family of size 
$n$
and dimension
$k$
if the following conditions hold:
\begin{itemize}
    \item $\langle u_i, v_i \rangle = 0$ for every $i \in [n]$
    \item $\langle u_i, v_j \rangle \in S \setminus \{0\}$ for every $i \neq j$
\end{itemize}
\end{definition}

The connection between decoding polynomials and matching vector PIR schemes is made explicit by the following result:

\begin{theorem}[Theorem 2.3 in \cite{GKS}]
\label{MV-PIR}
    Let $m$ be a positive integer with $r$ distinct prime factors, and let $p$ be a prime not dividing $m$.  
Set $M = mp$, and let
\[
\mathcal{F} = (\mathcal{U}, \mathcal{V})
\]
be an $S_M$-matching vector family over $\Z_M$ of size $n$ and dimension $k$, where $S_M$ denotes the canonical set for $\Z_M$.

Suppose $\F$ is a field of characteristic $p$ such that, for the canonical set $S_m \subseteq \Z_m$, there exists an $S_m$-decoding polynomial over $\F$ with sparsity $t$.

Then there exists a $t$-server PIR scheme with communication complexity $O(k)$.

\end{theorem}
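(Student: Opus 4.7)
The plan is to exhibit an explicit $t$-server protocol and verify its privacy and correctness. Write the decoding polynomial as $P(Z)=\sum_{\ell=1}^{t}b_\ell Z^{c_\ell}$ with $b_\ell\in\F$ and exponents $c_\ell$, and let $\gamma_m\in\F$ be a primitive $m$-th root of unity (available because $\operatorname{char}(\F)=p$ is coprime to $m$). The user with desired index $i$ samples a uniform blind $\alpha\in\Z_M^k$ and sends query $q_\ell=\alpha+c_\ell u_i\in\Z_M^k$ to server $\ell\in[t]$. Each server holds $(a_1,\dots,a_n)$ and returns $O(1)$ field elements consisting of the character sum $\sum_j a_j\gamma_m^{\langle q_\ell,v_j\rangle\bmod m}$ together with a small mod-$p$ companion described below. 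The user combines these responses with the coefficients $b_\ell$ and unblinds. Since each $q_\ell$ is marginally uniform in $\Z_M^k$ regardless of $i$, privacy against a single server is immediate, and the total communication is $O(tk)=O(k)$.

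\textbf{Correctness, generic case.} After the user divides out the blind factor $\gamma_m^{\langle\alpha,v_i\rangle}$ (which they can compute since they know $\alpha$ and $v_i$), the contribution of $a_j$ to the combined response is proportional to $P\bigl(\gamma_m^{\langle u_i,v_j\rangle\bmod m}\bigr)$. For $j=i$ we have $\langle u_i,v_i\rangle=0$, so $P(1)=1$ and the term $a_i$ survives. For $j\neq i$ the matching vector property gives $\langle u_i,v_j\rangle\in S_M\setminus\{0\}$, which under the CRT isomorphism $\Z_M\cong\Z_m\times\Z_p$ corresponds to a pair $(s,b)\in S_m\times\{0,1\}$; whenever $s\in S_m\setminus\{0\}$, the assumed $S_m$-decoding property of $P$ forces the term to vanish.

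\textbf{Main obstacle.} The delicate case is the single CRT class corresponding to $(0,1)\in S_M\setminus\{0\}$: for $j\neq i$ with $\langle u_i,v_j\rangle$ in this class, we have $P(\gamma_m^0)=P(1)=1$, leaving a spurious contribution. This is precisely the reason the construction uses $M=mp$ rather than $m$, and handling it is the main technical step. The fix is to have each server return, in addition to its character sum, one auxiliary value that encodes the mod-$p$ reduction of $\langle q_\ell,v_j\rangle$ (for example, a short $\F_p$-indexed linear combination of the $a_j$'s), and to apply an $\F_p$-linear correction on the user's side that cancels exactly the $(0,1)$-class contributions while preserving the $a_i$ term. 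Verifying that this correction works uniformly across the offending indices, and that it costs only $O(1)$ additional communication per server, is the core of the argument; it rests on cleanly combining the $\F$-side $m$-th-root-of-unity structure with the $\F_p$-vector-space structure provided by the characteristic.
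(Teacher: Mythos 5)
This theorem is not proved in the paper at all: it is quoted verbatim as a black box from the reference labeled \cite{GKS} (``Theorem 2.3 in \cite{GKS}''), so there is no in-paper argument to compare your proposal against. You have essentially attempted to reconstruct the Efremenko-style matching-vector PIR reduction from scratch, and the generic skeleton you set up (queries $q_\ell=\alpha+c_\ell u_i$, answer is a character sum, user recombines with the coefficients $b_\ell$ of the decoding polynomial, privacy from marginal uniformity of each $q_\ell$) is indeed the standard one. You also correctly locate the one genuinely delicate point of the construction, namely the CRT class $(0,1)\in S_m\times\{0,1\}\cong S_M$, where $\langle u_i,v_j\rangle\equiv 0\pmod m$ but $\equiv 1\pmod p$, so that $P(\gamma_m^0)=1$ fails to kill the cross term.

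However, at exactly that point the proposal stops being a proof. You write that ``the fix is to have each server return \ldots one auxiliary value that encodes the mod-$p$ reduction'' and to ``apply an $\F_p$-linear correction on the user's side,'' and then concede that ``verifying that this correction works \ldots is the core of the argument'' without carrying it out. This is precisely where the characteristic-$p$ group-algebra structure of $\F[\Z_p]\cong\F[T]/(T-1)^p$ (or an equivalent derivative/multiplicity trick) has to enter, and it is nontrivial: one must specify the auxiliary responses exactly, check that they are still $O(1)$ field elements, reverify single-server privacy once the responses carry mod-$p$ information, and show that the user's correction cancels the $(0,1)$-class contributions for \emph{every} blind $\alpha$ while preserving the $a_i$ term. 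None of this is done, and the proposed ``short $\F_p$-indexed linear combination'' is too vague to even check that it is well-defined or has the claimed cost. As written, this is an accurate high-level summary of the difficulty together with an IOU for its resolution, not a proof; the step you flag as ``the core of the argument'' is a genuine gap.
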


The above theorem highlights the role of $S$-decoding polynomials in such PIR schemes.  
In particular, designing sparser $S$-decoding polynomials leads directly to improved PIR constructions: 
The same communication complexity can be achieved with fewer servers.  
In other words, the sparsity of decoding polynomials dictates the number of servers required.

Theorem~\ref{thm:MVF} gives the best known construction of large matching vector families, where the set $S$ is taken to be the canonical set.  

\begin{theorem}[Large Matching Vector Families, Theorem 1.4 in \cite{Gro}]
\label{thm:MVF}
Let 
$m = p_1p_2...p_r$ where $p_1, p_2,..., p_r$ are distinct
primes and $r \geq 2$. Then, there exists an explicitly constructible $S$-matching vector family 
$\Fam$ 
in
$\Z_m^k$
of size 
$n \geq exp\left( \Omega\left(\frac{(log \ k)^r}{(log \ log \ k)^{r - 1}} \right)  \right)$
where 
$S$
is the canonical set for $m$.
\end{theorem}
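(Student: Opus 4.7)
The plan is to reduce the construction of large $S$-matching vector families to the construction of a low-degree polynomial over $\Z_m$ that represents OR on the Boolean cube, following Barrington-Biegel-Rudich, and then to extract matching vectors by a variable-separation trick applied to this polynomial.

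First I would invoke the Barrington-Biegel-Rudich theorem to obtain a multilinear polynomial $Q\in\Z_m[z_1,\ldots,z_N]$ of total degree $d=O_r(N^{1/r})$ such that, under the CRT isomorphism $\Z_m\cong \prod_{i=1}^r \Z_{p_i}$, one has $Q(z)\bmod p_i = \mathrm{OR}(z_1,\ldots,z_N)$ for every Boolean $z$ and every $i$. Concretely, one builds mod-$p_i$ OR-representing polynomials of degree $O_r(N^{1/r})$ (via the symmetric-polynomial / Razborov-Smolensky construction combined with BBR's balancing trick of splitting the $N$ variables into $r$ blocks), and then lifts them by CRT to a single polynomial in $\Z_m[z]$. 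Because each CRT-coordinate of $Q(z)$ is in $\{0,1\}$, the values of $Q$ on $\{0,1\}^N$ automatically lie in the canonical set $S$; moreover $Q(0)=0$ and $Q(z)=1\in\Z_m$ for every nonzero Boolean $z$.

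Next I would convert $Q$ into matching vectors via a non-equality detector. Define $R(x,y) := Q(x\oplus y)$, where $(x\oplus y)_i := x_i+y_i-2x_iy_i$, so that on Boolean inputs $R(x,x)=0$ and $R(x,y)\in S\setminus\{0\}$ whenever $x\neq y$. Expanding $R$ gives
\[
R(x,y) \;=\; \sum_{(T_x,T_y)} \alpha_{T_x,T_y}\,\Bigl(\prod_{i\in T_x} x_i\Bigr)\Bigl(\prod_{j\in T_y} y_j\Bigr),
\]
summed over pairs of subsets of $[N]$ of size at most $d$. Indexing coordinates of $\Z_m^k$ by such pairs $T=(T_x,T_y)$ and setting $u_x[T]:=\alpha_{T_x,T_y}\prod_{i\in T_x} x_i$ and $v_y[T]:=\prod_{j\in T_y} y_j$, one obtains $\langle u_x,v_y\rangle = R(x,y)$. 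Running $x$ over $\{0,1\}^N$ produces an $S$-matching vector family of size $n=2^N$ in $\Z_m^k$ with $k\leq \binom{N}{\leq d}^2$.

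Finally, parameter bookkeeping gives the claimed bound. From $\binom{N}{\leq d}\leq (eN/d)^d$ together with $d=\Theta_r(N^{1/r})$ one gets $\log k = \Theta_r(N^{1/r}\log N)$, and inverting this transcendental relation (using $\log N \sim r\log\log k$) yields $N=\Omega_r\bigl((\log k)^r/(\log\log k)^{r-1}\bigr)$, hence $n=2^N\geq \exp\bigl(\Omega_r((\log k)^r/(\log\log k)^{r-1})\bigr)$. The main obstacle is the BBR step: producing a polynomial whose CRT-image is OR on each prime component (so that its values land inside the canonical set $S$, not merely outside $0$) while keeping the degree $O_r(N^{1/r})$; the combinatorial symmetric-polynomial identities needed for this are the heart of the argument. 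A secondary subtlety is the parameter balancing that extracts $(\log\log k)^{r-1}$ rather than $(\log\log k)^r$ in the denominator, which requires choosing the hidden constant in $d=\Theta_r(N^{1/r})$ so as to save one logarithmic factor when inverting $\log k = \Theta_r(N^{1/r}\log N)$.
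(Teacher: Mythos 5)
The paper states this result by citation to \cite{Gro} and does not reprove it, so I am evaluating your sketch on its own terms. Your high-level plan (a BBR-style low-degree polynomial weakly representing OR modulo $m$, then a variable-separation step to extract matching vectors) is indeed the right outline, but there are two genuine gaps. First, a degree-$O_r(N^{1/r})$ polynomial with $Q(z)\bmod p_i=\mathrm{OR}(z)$ for \emph{every} Boolean $z$ cannot exist: over $\F_{p_i}$, evaluation on $\{0,1\}^N$ reduces any polynomial to its multilinear representative, and the unique multilinear polynomial agreeing with $\mathrm{OR}$ on all of $\{0,1\}^N$ is $1-\prod_i(1-z_i)$, which has degree exactly $N$. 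What BBR actually gives is a weaker form of representation (for instance, a polynomial whose restriction to inputs of bounded Hamming weight behaves correctly), and keeping track of exactly which form one has is what guarantees the inner products land in the canonical set $S$.

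Second, and more seriously, your parameter inversion gives the wrong power of $\log\log k$, and this cannot be repaired by tuning the hidden constant in $d$, as you suggest at the end. With $n=2^N$ and $\log k=\Theta(N^{1/r}\log N)$, one has $\log\log k\sim(1/r)\log N$, hence $N^{1/r}=\Theta(\log k/\log\log k)$ and $\log n=N=\Theta\bigl((\log k)^r/(\log\log k)^{r}\bigr)$: the denominator is $(\log\log k)^{r}$, one factor too many. A constant multiplying $d$ only changes the hidden constant in the $\Theta$, never the power of $\log\log k$. The missing idea is that Grolmusz does not index the family by all of $\{0,1\}^N$. Instead one fixes a Hamming weight $w$ and takes the ambient number of variables $N$ polynomially larger than $w$; the degree of the BBR polynomial is governed by $w$ (so $d=\Theta(w^{1/r})$), not by $N$, and the family is indexed by $\binom{[N]}{w}$. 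Then $\log n=\Theta\bigl(w\log(N/w)\bigr)=\Theta(w\log w)$ while $\log k=\Theta(w^{1/r}\log w)$; inverting now gives $w=\Theta\bigl((\log k/\log\log k)^r\bigr)$ and therefore $\log n=\Theta(w\log w)=\Theta\bigl((\log k)^r/(\log\log k)^{r-1}\bigr)$, which is the stated bound. The fixed-weight slice with $N\gg w$, combined with the correct form of the BBR guarantee, is where the real work is, and your proposal as written does not contain it.
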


Combining this with Theorem~\ref{MV-PIR}, we obtain that any $S$-decoding polynomial for the canonical set directly yields a PIR scheme with parameters determined by the sparsity of the polynomial.  
In particular, a $t$-sparse decoding polynomial for the canonical $S$ gives rise to a $t$-server PIR scheme with communication complexity $\exp(\Otilde( ( \log n)^{\frac{1}{r+1}}))$ and database size $n$.

For the canonical set $S \subseteq \Z_m$, Corollary~\ref{cor:poly-lower} shows that every $S$-decoding polynomial has sparsity at least $r+1$.  
Consequently, the Matching Vector PIR scheme based on the above matching vector family requires at least $r+1$ servers.  

Stating this another way, the communication complexity of a $t$ server Matching Vector PIR scheme based on the this matching vector family and an arbitrary $S$-decoding polynomial is bounded below by
\[
\exp\!\left(\bigl((\log n)^{1/t}\bigr)\right).
\]
Hence these PIR schemes cannot achieve polylogarithmic communication with a constant number of servers merely by improving the $S$-decoding polynomials.

It would be extremely interesting to find $S$-decoding polynomials matching this sparsity bound. Today, the best known sparsity comes from the results of~\cite{Cheeetal}, and is approximately is $(\sqrt{3})^r$ under a plausible number theoretic hypothesis.

\bibliographystyle{alpha}
\bibliography{pir}

@article{DonohoStark1989,
  author    = {Donoho, David L. and Stark, Philip B.},
  title     = {Uncertainty Principles and Signal Recovery},
  journal   = {SIAM Journal on Applied Mathematics},
  volume    = {49},
  number    = {3},
  pages     = {906--931},
  year      = {1989},
  doi       = {10.1137/0149053}
}

@article{Meshulam2006,
  author    = {Meshulam, Roy},
  title     = {An Uncertainty Inequality for Finite Abelian Groups},
  journal   = {European Journal of Combinatorics},
  volume    = {27},
  number    = {1},
  pages     = {63--67},
  year      = {2006},
  doi       = {10.1016/j.ejc.2005.01.004}
}

@article{Tao2005,
  author    = {Tao, Terence},
  title     = {An Uncertainty Principle for Cyclic Groups of Prime Order},
  journal   = {Mathematical Research Letters},
  volume    = {12},
  number    = {1},
  pages     = {121--127},
  year      = {2005},
  doi       = {10.4310/MRL.2005.v12.n1.a10}
}

@article{CGKS,
author = {Chor, Benny and Goldreich, Oded and Kushilevitz, Eyal and Sudan, Madhu},
title = {{Private information retrieval}},
year = {1998},
publisher = {Association for Computing Machinery},
volume = {45},
pages = {965–981},
}

@book{RudinRealComplexAnalysis,
  author    = {Rudin, Walter},
  title     = {Real and Complex Analysis},
  edition   = {3},
  publisher = {McGraw-Hill},
  address   = {New York},
  year      = {1987},
  isbn      = {978-0070542341}
}

@inproceedings{ADLOS25,
  author       = {Divesh Aggarwal and
                  Pranjal Dutta and
                  Zeyong Li and
                  Maciej Obremski and
                  Sidhant Saraogi},
  editor       = {Raghu Meka},
  title        = {Improved Lower Bounds for 3-Query Matching Vector Codes},
  booktitle    = {16th Innovations in Theoretical Computer Science Conference, {ITCS}
                  2025, January 7-10, 2025, Columbia University, New York, NY, {USA}},
  series       = {LIPIcs},
  volume       = {325},
  pages        = {2:1--2:19},
  publisher    = {Schloss Dagstuhl - Leibniz-Zentrum f{\"{u}}r Informatik},
  year         = {2025},
  url          = {https://doi.org/10.4230/LIPIcs.ITCS.2025.2},
  doi          = {10.4230/LIPICS.ITCS.2025.2},
  timestamp    = {Tue, 11 Feb 2025 16:49:36 +0100},
  biburl       = {https://dblp.org/rec/conf/innovations/AggarwalDLOS25.bib},
  bibsource    = {dblp computer science bibliography, https://dblp.org}
}

@inproceedings{BDL-matching,
author = {Bhowmick, Abhishek and Dvir, Zeev and Lovett, Shachar},
title = {New bounds for matching vector families},
year = {2013},
isbn = {9781450320290},
publisher = {Association for Computing Machinery},
address = {New York, NY, USA},
url = {https://doi.org/10.1145/2488608.2488713},
doi = {10.1145/2488608.2488713},
abstract = {A Matching Vector (MV) family modulo m is a pair of ordered lists U=(u1,...,ut) and V=(v1,...,vt) where ui,vj ∈ Zmn with the following inner product pattern: for any i, {ui,vi}=0, and for any i ≠ j, {ui,vj} ≠ 0. A MV family is called q-restricted if inner products {ui,vj} take at most q different values.Our interest in MV families stems from their recent application in the construction of sub-exponential locally decodable codes (LDCs). There, q-restricted MV families are used to construct LDCs with q queries, and there is special interest in the regime where q is constant. When m is a prime it is known that such constructions yield codes with exponential block length. However, for composite m the behaviour is dramatically different. A recent work by Efremenko [8] (based on an approach initiated by Yekhanin [24]) gives the first sub-exponential LDC with constant queries. It is based on a construction of a MV family of super-polynomial size by Grolmusz [10] modulo composite m.In this work, we prove two lower bounds on the block length of LDCs which are based on black box construction using MV families. When q is constant (or sufficiently small), we prove that such LDCs must have a quadratic block length. When the modulus m is constant (as it is in the construction of Efremenko [8]) we prove a super-polynomial lower bound on the block-length of the LDCs, assuming a well-known conjecture in additive combinatorics, the polynomial Freiman-Ruzsa conjecture over Zm.},
booktitle = {Proceedings of the Forty-Fifth Annual ACM Symposium on Theory of Computing},
pages = {823–832},
numpages = {10},
keywords = {restricted modular intersection},
location = {Palo Alto, California, USA},
series = {STOC '13}
}

@article{Yekhanin,
author = {Yekhanin, Sergey},
title = {{Towards 3-query locally decodable codes of subexponential length}},
year = {2008},
publisher = {Association for Computing Machinery},
volume = {55},
number = {1},
journal = {J. ACM},
articleno = {1},
numpages = {16},
}

@inproceedings{Efremenko,
author = {Efremenko, Klim},
title = {{3-query locally decodable codes of subexponential length}},
year = {2009},
publisher = {Association for Computing Machinery},
pages = {39–44},
numpages = {6},
series = {STOC '09}
}

@article{IS,
  title={{Improved Constructions for Query-Efficient Locally Decodable Codes of Subexponential Length}},
  author={Toshiya Itoh and Yasuhiro Suzuki},
  journal={IEICE Trans. Inf. Syst.},
  year={2008},
  volume={93-D},
  pages={263-270},
}

@article{Cheeetal,
    title = {{Query-Efficient Locally Decodable Codes of Subexponential Length}},
    author = {Chee, Yeow Meng and Feng, Tao and Ling, San and Wang, Huaxiong and  Feng Zhang, Liang},
    journal = {Computational Complexity},
    year = {2013},
    volume = {22},
    pages = {159–189},
}

@article{Raghavendra,
  title={{A Note on Yekhanin's Locally Decodable Codes}},
  author={Prasad Raghavendra},
  journal={Electron. Colloquium Comput. Complex.},
  year={2007},
  volume={TR07},
}

@inproceedings{DG,
author = {Dvir, Zeev and Gopi, Sivakanth},
title = {{2-Server PIR with Sub-Polynomial Communication}},
year = {2015},
publisher = {Association for Computing Machinery},
pages = {577–584},
series = {STOC '15}
}

@INPROCEEDINGS{DGY,
  author={Dvir, Zeev and Gopalan, Parikshit and Yekhanin, Sergey},
  publisher={FOCS}, 
  title={{Matching Vector Codes}}, 
  year={2010},
  pages={705-714},
}

@article{Gro,
    title = {{Superpolynomial size set-systems with restricted intersections
    mod 6 and explicit Ramsey graphs}},
    author = {Grolmusz, Vince},
    publisher = {Combinatorica},
    volume = {20},
    year = {2000}
}

@article{BBR,
  author       = {David A. Mix Barrington and
                  Richard Beigel and
                  Steven Rudich},
  title        = {Representing {B}oolean Functions as Polynomials Modulo Composite Numbers},
  journal      = {Comput. Complex.},
  volume       = {4},
  pages        = {367--382},
  year         = {1994},
  url          = {https://doi.org/10.1007/BF01263424},
  doi          = {10.1007/BF01263424},
  timestamp    = {Sun, 15 Mar 2020 19:46:57 +0100},
  biburl       = {https://dblp.org/rec/journals/cc/BarringtonBR94.bib},
  bibsource    = {dblp computer science bibliography, https://dblp.org}
}

@INPROCEEDINGS{GKS,
author = {Ghasemi, Fatemeh and Kopparty, Swastik and Sudan, Madhu},
title = {Improved PIR Schemes using Matching Vectors and Derivatives},
year = {2025},
isbn = {9798400715105},
url = {https://doi.org/10.1145/3717823.3718313},
doi = {10.1145/3717823.3718313},
pages = {1648–1656},
numpages = {9},
series = {STOC '25}
}

\appendix

\section{A more sensitive lower bound}

Define $ F( m_1, \ldots, m_r )$ to be the smallest size of a subset $S \subseteq \prod_i \Z_{m_i}$  such that:
$$ S+ \prod_i (\Z_{m_i}\setminus \{0\}) = \prod_i \Z_{m_i}.$$

Our proof in Section 3 actually shows that this quantity is a lower bound on the Fourier sparsity of delta functions on $\{0,1\}^r \subseteq \prod_{i} \Z_{m_i}$.
It is a common generalization of the bounds $r+1$ and $\prod_{i} \frac{m_i}{m_i-1}$.
Here we note a simple recursive inequality for it that is stronger than both the above bounds.
It implies, for example, that when $r = 2m$, that the Fourier sparsity of delta functions on $\Z_m^r$ is at least $\approx e \cdot m$ (while the best upper bound we know for this case is that there exist delta functions with Fourier sparsity $\leq m^2$).

\begin{lemma}
$$F(m_1, \ldots, m_r)  \geq \lceil \frac{m_r}{m_r -1}F(m_1, \ldots, m_{r-1})\rceil.$$ 
\end{lemma}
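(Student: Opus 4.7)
The plan is to take an optimal covering set $S \subseteq \prod_i \Z_{m_i}$ of size $F(m_1, \ldots, m_r)$, slice it by the last coordinate, and obtain $m_r$ different $(r-1)$-dimensional covering sets, each missing only one slice. An averaging argument then gives the claimed recursive inequality.

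First I would stratify $S$ by its $r$th coordinate. For each $c \in \Z_{m_r}$, let
$$S_c = \{(s_1, \ldots, s_{r-1}) \in \prod_{i<r} \Z_{m_i} : (s_1, \ldots, s_{r-1}, c) \in S\},$$
so that $|S| = \sum_{c \in \Z_{m_r}} |S_c|$. Then for each $c \in \Z_{m_r}$ define
$$T_c = \bigcup_{c' \neq c} S_{c'} \;\subseteq\; \prod_{i<r} \Z_{m_i}.$$

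The key claim is that each $T_c$ is a valid covering set for the $(r-1)$-dimensional problem; that is, $T_c + \prod_{i<r}(\Z_{m_i}\setminus\{0\}) = \prod_{i<r} \Z_{m_i}$. To verify this, I would take any $(u_1,\ldots,u_{r-1}) \in \prod_{i<r} \Z_{m_i}$, append the coordinate $c$, and apply the covering property of $S$ to the resulting vector $(u_1,\ldots,u_{r-1},c)$: there must exist $(s_1,\ldots,s_r)\in S$ with $s_i \neq u_i$ for all $i \le r$. In particular $s_r \neq c$, so the projection $(s_1,\ldots,s_{r-1})$ lies in $S_{s_r} \subseteq T_c$ and satisfies $s_i \neq u_i$ for $i < r$, which is exactly the required covering property. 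Hence $|T_c| \geq F(m_1, \ldots, m_{r-1})$.

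Now I would combine these $m_r$ inequalities using subadditivity: $|T_c| \leq \sum_{c' \neq c} |S_{c'}|$. Summing over $c$, each $|S_{c'}|$ is counted once for each $c \neq c'$, i.e.\ $(m_r - 1)$ times, so
$$m_r \cdot F(m_1,\ldots,m_{r-1}) \;\leq\; \sum_{c \in \Z_{m_r}} |T_c| \;\leq\; (m_r - 1) \sum_{c'} |S_{c'}| \;=\; (m_r - 1) |S|.$$
Rearranging gives $|S| \geq \frac{m_r}{m_r - 1} F(m_1,\ldots,m_{r-1})$, and since $|S|$ is an integer, I can take the ceiling to conclude.

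The argument is quite clean, and I do not anticipate a real obstacle; the only step requiring care is verifying that $T_c$ genuinely covers in the $(r-1)$-dimensional sense, which uses the full covering property of $S$ applied specifically to points whose last coordinate equals $c$. Once that observation is in hand, the double-counting identity yields the factor $\frac{m_r}{m_r-1}$ automatically.
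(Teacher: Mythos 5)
Your proof is correct and is essentially the same argument as the paper's: both stratify $S$ by the last coordinate, show that deleting one slice and projecting yields a covering set for the $(r-1)$-dimensional problem, and then bound the size of the deleted slice. The only cosmetic difference is that the paper applies pigeonhole to pick the single largest slice $S_x$ (getting $|S_x| \geq |S|/m_r$), whereas you average over all $m_r$ choices of deleted slice via double counting; both yield $|S| \geq \frac{m_r}{m_r-1}F(m_1,\ldots,m_{r-1})$. The paper's pigeonhole variant has a small bonus: since $|S_x| \geq \lceil |S|/m_r \rceil$, it directly gives the slightly sharper refinement $F(m_1,\ldots,m_{r-1}) \leq F(m_1,\ldots,m_r) - \lceil \frac{1}{m_r}F(m_1,\ldots,m_r)\rceil$ that the paper records after the lemma, whereas the averaging route recovers only the stated ceiling bound.
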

In particular, we have $$F(m_1, \ldots, m_r) \geq \max\left( F(m_1, \ldots, m_{r-1}) + 1,   \frac{m_r}{m_r - 1} F(m_1, \ldots, m_{r-1})\right),$$
and
$$ F(m,m, \ldots, m) \geq \begin{cases} r+1  & r \leq m-1 \\ m \cdot \left(\frac{m}{m-1}\right)^{r - m+1} &  r \geq m\end{cases}$$
\begin{proof}
Let $G = \prod_{i} \Z_{m_i}$.
Let $H = \prod_{i} ( \Z_{m_i} \setminus \{0\} ) $.
Suppose $S \subseteq G$ is such that $S + H = G$.

For $x \in \Z_m$, let $S_x$ be the set of all $v \in S$ with $v_r = x$. 
Choose $x$ with $|S_x|$ as large as possible: then $|S_x| \geq \frac{1}{m_r}|S|$.

Observe that for any $v \in S_x$ and any $w \in v + H$,
we have $w_r \neq x$. 

Thus:
$$\left( (S \setminus S_x)  +  H \right) \supseteq \left(\prod_{i < r} \Z_{m_i} \right) \times \{x \}.$$

Projecting to the first $r-1$ coordinates, we see that
the projection $S'$ of $S\setminus S_x$ to the first $r-1$ coordinates satisfies:
$$ S' +  \prod_{i < r} (\Z_{m_i} \setminus \{0\} ) = \prod_{i < r} \Z_{m_i}.$$

Thus $|S'| \geq F(m_1, \ldots, m_{r-1})$.
Combined with the inequality:
$$ |S'| \leq | S \setminus S_x | = |S|  - |S_x| \leq  \frac{m_r - 1}{m_r} |S|,$$
we get the result.
\end{proof}

In fact, the proof even shows that
$$F(m_1, \ldots, m_{r-1} ) \leq F(m_1, \ldots, m_r) - \left\lceil \frac{1}{m_r} F(m_1, \ldots, m_r) \right\rceil $$ (since in the notation of the proof, we get $|S'| \leq   |S| - \lceil \frac{1}{m_r} |S| \rceil$).

\section{Covering the cube by affine hyperplanes over $\F_p$}

A classical result of Alon and Furedi shows that any set $H$ of affine hyperplanes in $\R^r$ that cover all but one of the points of the Boolean hypercube must have size at least $r$. Specifically, if the hyperplanes of $H$ cover $\left(\{0,1\}^r \setminus \{0\} \right) \subseteq \R^r$ using affine hyperplanes, none of which contain $0$, then $|H| \geq r$.

Our results imply something nontrivial about the analogous covering question for $\left(\{0,1\}^r \setminus \{0\} \right) \subseteq \F_p^r$ for a prime $p$.
Specifically, we show that if $H$ is a set of hyperplanes, and:
$$H = H_1 \cup H_2 \ldots H_t$$
is the partition of $H$ into subsets of parallel hyperplanes,
then:
$$ \prod_{i} (|H_i| + 1 ) \geq \max\left( r+1, \left( \frac{p}{p-1} \right)^r\right).$$
On the other hand, there exist such sets $H_i$ with
$$ \prod_{i} (|H_i| + 1 ) \leq \left( p^{1/(p-1)} \right)^r$$

We now prove the lower bound.
If $W$ is a set of $w$ parallel hyperplanes, then there is a function $f_W : \F_p^r \to \mathbb C$ with Fourier sparsity $w+1$ and which vanishes exactly on the union of those hyperplanes. Taking the product of $f_{H_i}$ over all the $i$ and scaling by a constant, we get a delta function with Fourier sparsity $\prod_i (|H_i| + 1)$.
Then our Theorem~\ref{thm:bounds-same-domain} implies the result.

The upper bound follows by observing that the Fourier sparse function constructed in Claim~\ref{claim:upper-general} is of the above form.

\section{Delta functions on $\{0,1\}^2 \subseteq  \Z_{m_1} \times \Z_{m_2}$ over $\C$}
\label{sec:appendix-mobius}

In this appendix we consider the special case of delta functions on $\{0,1\}^2 \subseteq G = \Z_{m_1} \times \Z_{m_2}$ over the complex numbers.
We show that when $m_1$ and $m_2$ are coprime, there is no $\C$-valued
delta function on $\{0,1\}^2 \subseteq G$ with Fourier sparsity $\leq 3$. The argument uses
complex-analytic tools. 

By~\cite{Efremenko,IS,Cheeetal}, there are some $G$ of the form $\Z_{m_1} \times \Z_{m_2}$ (with $m_1, m_2$ coprime and odd) for which there exists a delta function $f: G \to \overline{\F}_2$ (the algebraic closure of $\F_2$) with Fourier sparsity $3$.
Thus our result needs to use some special property of complex numbers.

In brief, we express the property of being a delta function on $G$ in terms of vanishing of multilinear polynomials $P(X,Y)$ at roots of unity. Vanishing of the multilinear polynomial $P(X,Y)$  can be expressed in terms of a M\"{o}bius transformation $Y = A_P(X)$. Finally, we use properties of M\"{o}bius transformations on the unit disk in $\C$ to get the result.

\begin{proposition}\label{prop:complex-r2}
Let $m_1,m_2$ be coprime positive integers and let $G = \Z_{m_1}\times \Z_{m_2}$.
Suppose $f : G \to \C$ is a function that is delta on $\{0,1\}^2$, i.e.
\[
f(0,0) = 1,\qquad f(1,0)=f(0,1)=f(1,1)=0.
\]
Then the Fourier sparsity of $f$ is at least $4$.
\end{proposition}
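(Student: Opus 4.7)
\textbf{Proof proposal for Proposition~\ref{prop:complex-r2}.}

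The plan is to argue by contradiction: assume $f$ has Fourier sparsity at most $3$ and derive an incompatibility between the delta conditions on $f$, the fact that the characters take values on the unit circle of $\C$, and the coprimality of $m_1$ and $m_2$. Write
\[
f(x_1, x_2) \;=\; \sum_{i=1}^{3} c_i\, \alpha_i^{x_1} \beta_i^{x_2},
\]
where the pairs $(\alpha_i,\beta_i)$ are distinct with $\alpha_i^{m_1} = \beta_i^{m_2} = 1$ (the cases of sparsity $\le 2$ are easier sub-cases of what follows). The conditions $f(0,0) = 1$, $f(1,0) = f(0,1) = f(1,1) = 0$ become
\[
\sum_i c_i = 1,\quad \sum_i c_i \alpha_i = 0,\quad \sum_i c_i \beta_i = 0,\quad \sum_i c_i \alpha_i \beta_i = 0.
\]
The last three equations say that $c = (c_1,c_2,c_3)$ lies in the kernel of the $3 \times 3$ matrix $N$ whose $i$-th column is $(\alpha_i, \beta_i, \alpha_i\beta_i)^\top$. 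A short check shows that $\operatorname{rank}(N) \le 1$ would force two columns of $N$ to be scalar multiples, which (using the relation between the third entry and the product of the first two) forces two of the $(\alpha_i,\beta_i)$ to coincide, contradicting distinctness. Thus $\operatorname{rank}(N) = 2$ and $\ker(N)$ is spanned by some nonzero $(d_1,d_2,d_3)$; equivalently, the three points $(\alpha_i, \beta_i) \in \C^2$ all lie on the bilinear curve $d_1 X + d_2 Y + d_3 XY = 0$.

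I would then split on the vanishing pattern of the $d_j$. If $d_1 = 0$ the relation becomes $\beta_i(d_2 + d_3 \alpha_i) = 0$, forcing all $\alpha_i$ equal to a common $\alpha$; but then $f(1,0) = \alpha \sum_i c_i = \alpha \ne 0$, contradicting $f(1,0) = 0$. The case $d_2 = 0$ is symmetric and gives $f(0,1) \ne 0$. If $d_3 = 0$ the curve is the line $Y = \lambda X$ with $\lambda = -d_1/d_2$; since $|\alpha_i| = |\beta_i| = 1$ we have $|\lambda| = 1$. For any pair of indices the ratio $\alpha_i/\alpha_j = \beta_i/\beta_j$ is then simultaneously an $m_1$th and an $m_2$th root of unity, so by coprimality of $m_1$ and $m_2$ this ratio equals $1$; hence all $(\alpha_i, \beta_i)$ coincide, contradiction.

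The one remaining case has all three $d_j$ nonzero, and is where the complex-analytic input enters. Solving $d_1 X + d_2 Y + d_3 XY = 0$ for $Y$ expresses the curve as the graph of the M\"obius transformation
\[
A(X) \;=\; \frac{-d_1 X}{d_2 + d_3 X},
\]
which satisfies $A(0) = 0$ and $A(\alpha_i) = \beta_i$ for $i = 1,2,3$. Since a M\"obius transformation is determined by its values on three points, $A$ sends three points of the unit circle to three points of the unit circle and hence maps the entire unit circle bijectively to itself. As $A$ is a homeomorphism of the Riemann sphere and $A(0) = 0$ lies in the open unit disk, $A$ cannot swap the two open connected components of the complement of the unit circle, so it maps the open unit disk biholomorphically onto itself. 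By the Schwarz lemma in its rigidity form for biholomorphisms of the disk fixing $0$, $A$ must be a rotation $X \mapsto \lambda X$ with $|\lambda|=1$. But equating $-d_1 X/(d_2 + d_3 X) = \lambda X$ as rational functions immediately forces $d_3 = 0$, contradicting the case assumption.

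The hardest step I anticipate is the ``inside--outside'' argument in the final case: one must cleanly argue that $A$ preserves the open unit disk and not merely the unit circle, so that Schwarz actually applies. A secondary bookkeeping concern is that coprimality of $m_1, m_2$ is used exactly once, in the case $d_3 = 0$; this is consistent with the remark in the appendix introduction that sparsity-$3$ delta functions do exist over $\overline{\F}_2$, so the proof must genuinely rely both on the complex-analytic rigidity (in the M\"obius case) and on the coprimality (in the linear case).
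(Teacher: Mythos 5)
Your proof is correct in its main ideas and reaches the same M\"obius/Schwarz/coprimality endgame as the paper, but the linear-algebra framing is genuinely different and a bit leaner. The paper sets up the $4\times 3$ matrix $M$ of character values, rephrases ``a $3$-sparse delta function exists'' as ``every multilinear polynomial $P(X,Y)$ vanishing at the three points $(\alpha_i,\beta_i)$ also vanishes at $(0,0)$,'' and then refutes the latter by explicitly constructing a $P$ with $P(0,0)\neq 0$: when two $\alpha_i$ or two $\beta_i$ coincide it uses a product $(X-\alpha)(Y-\beta)$ (which has a constant term), and only in the all-distinct case does it reach for a M\"obius transformation. You instead discard the constant equation $\sum c_i = 1$ except to note $c\neq 0$, keep only the $3\times3$ matrix $N$, observe that $Nc=0$ with $c\neq 0$ forces $\det N = 0$, and deduce that the three points lie on a bilinear curve $d_1X + d_2Y + d_3XY = 0$ through the origin; the degenerate configurations then fall out naturally of the cases $d_1=0$ or $d_2=0$ (which force one coordinate to be constant across the three points, directly violating $f(1,0)=0$ or $f(0,1)=0$), while the general case leads to the same M\"obius map $A(X) = -d_1X/(d_2+d_3X)$. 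This is a tidier route to the same transformation.

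Two local corrections are needed. First, there is a left-versus-right kernel slip: $Nc=0$ with $c\neq 0$ gives $\det N=0$, and what you want is a nontrivial \emph{left} null vector $(d_1,d_2,d_3)$ (a dependence among the rows of $N$), since that is what translates into $d_1\alpha_i + d_2\beta_i + d_3\alpha_i\beta_i = 0$ for each $i$. The vector spanning $\ker(N)$ in the usual (right-kernel) sense is simply proportional to $c$ and does not give the bilinear curve; ``equivalently'' is the wrong word there. Second, in the all-$d_j$-nonzero case, the assertion that $A$ sends three points of the unit circle to three points of the unit circle (and hence preserves the circle) requires the $\alpha_i$ to be pairwise distinct, which you do not verify. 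It holds here: if $\alpha_i=\alpha_j$, subtracting the two curve relations gives $(\beta_i-\beta_j)(d_2+d_3\alpha_i)=0$; distinctness of the pairs forces $\beta_i\neq\beta_j$, hence $d_2+d_3\alpha_i=0$, and substituting back gives $d_1\alpha_i=0$, so $d_1=0$, contradicting the case hypothesis. The same computation also shows $d_2+d_3\alpha_i\neq 0$, so $A$ is actually defined at each $\alpha_i$. With these two points patched the argument goes through.
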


\begin{proof}
Write $G = \Z_{m_1}\times \Z_{m_2}$ and fix primitive $m_1$-th and
$m_2$-th roots of unity $\omega_{m_1},\omega_{m_2}\in\C$.  Characters
of $G$ have the form
\[
\psi_{(a,b)}(x,y) \;=\; \omega_{m_1}^{a x}\,\omega_{m_2}^{b y},
\qquad (a,b)\in \Z_{m_1}\times \Z_{m_2}.
\]

Assume for contradiction that $f$ is a delta function on $\{0,1\}^2$
with Fourier sparsity at most $3$.  Assume that $f$ has a Fourier expansion
of the form
\[
f(x,y) \;=\; c_1 \psi_{(a_1,b_1)}(x,y) +
              c_2 \psi_{(a_2,b_2)}(x,y) +
              c_3 \psi_{(a_3,b_3)}(x,y),
\]
with $(a_i,b_i)$ pairwise distinct and $c_1,c_2,c_3\in\C$.
Since $f(0,0) =1$, not all the $c_i$ are zero.

Set
\[
\alpha_i := \omega_{m_1}^{a_i},\qquad \beta_i := \omega_{m_2}^{b_i}
\quad (i=1,2,3),
\]
so that each $\alpha_i$ is an $m_1$-th root of unity and each $\beta_i$
is an $m_2$-th root of unity.  The values of the three characters at the
four points $(0,0),(1,0),(0,1),(1,1)$ are:
\[
\psi_{(a_i,b_i)}(0,0) = 1,\quad
\psi_{(a_i,b_i)}(1,0) = \alpha_i,\quad
\psi_{(a_i,b_i)}(0,1) = \beta_i,\quad
\psi_{(a_i,b_i)}(1,1) = \alpha_i\beta_i.
\]
Thus, if we form the $4\times 3$ matrix
\[
M \;=\;
\begin{pmatrix}
1 & 1 & 1\\
\alpha_1 & \alpha_2 & \alpha_3\\
\beta_1 & \beta_2 & \beta_3\\
\alpha_1\beta_1 & \alpha_2\beta_2 & \alpha_3\beta_3
\end{pmatrix},
\]
the $i$-th column of $M$ is the column vector of values of
$\psi_{(a_i,b_i)}$ on $\{0,1\}^2$.

The vector of values of $f$ on $\{0,1\}^2$ is then
\[
M c \;=\; 
\begin{pmatrix}
1\\ 0\\ 0\\ 0
\end{pmatrix},
\qquad
\mbox{ where }
c := \begin{pmatrix}
c_1\\ c_2\\ c_3
\end{pmatrix}
\]
In other words, the existence of a $3$-sparse delta function with these
three characters is equivalent to the statement that
\[
e_1 := \begin{pmatrix}
1\\ 0\\ 0\\ 0
\end{pmatrix}
\]
lies in the column space of $M$.

By elementary linear algebra, $e_1$ lies in the column space of $M$ if
and only if every row vector $v\in\C^4$ that is orthogonal to all the
columns of $M$ is also orthogonal to $e_1$, i.e.
\[
v \cdot M = 0 \;\Rightarrow\; v \cdot e_1 = 0.
\]
We now interpret such row vectors $v$ in terms of multilinear
polynomials evaluated at the $(\alpha_i, \beta_i)$.  Write
\[
v = (v_1,v_X,v_Y,v_{XY}) \in \C^4
\]
and associate to $v$ the multilinear polynomial
\[
P_v(X,Y) \;=\; v_1 + v_X X + v_Y Y + v_{XY} X Y.
\]
Observe that for $i=1,2,3$ we have
\[
v \cdot \text{(column $i$ of $M$)}
\;=\; P_v(\alpha_i,\beta_i).
\]
Thus $v \cdot M = 0$ is equivalent to
\[
P_v(\alpha_1,\beta_1) =
P_v(\alpha_2,\beta_2) =
P_v(\alpha_3,\beta_3) = 0.
\]
On the other hand
\[
v \cdot e_1 = v_1 = P_v(0,0).
\]

We conclude that the existence of a $(\leq 3)$-sparse delta function with
support $\{\psi_{(a_i,b_i)}\}_{i=1}^3$ is equivalent to the following
statement:

\medskip\noindent
\emph{Every multilinear polynomial $P(X,Y)$ that vanishes at the three
points $(\alpha_i,\beta_i)$, $i=1,2,3$, must also vanish at $(0,0)$.}
\medskip

We now show that this statement is false.
\begin{lemma}
\label{lem:multi}
Let $m_1, m_2$ be relatively prime integers.
Let $\alpha_1, \alpha_2, \alpha_3 \in \mathbb C$ be $m_1$-th roots of $1$,
and $\beta_1, \beta_2, \beta_3 \in \mathbb C$ be $m_2$-th roots of $1$.

Then there exists a multilinear polynomial $P(X,Y)$ such that:
$P(\alpha_i,\beta_i) = 0$ for each $i$, and $P(0,0) \neq 0$.
\end{lemma}
\begin{proof}
    We take cases on whether the $\alpha_i$ and $\beta_i$ are all distinct or not.

    \begin{itemize}
        \item {\bf Case 1:} Some two (or more) of the $\alpha_i$ are equal. Without loss of generality, assume $\alpha_1 = \alpha_2$. Then $P(X,Y) = (X-\alpha_1)(Y-\beta_3)$ is a multilinear polynomial that vanishes at all the $(\alpha_i, \beta_i)$ and is nonzero at $(0,0)$.
        \item {\bf Case 2:} Some two (or more) of the $\beta_i$ are equal. Without loss of generality, assume $\beta_1 = \beta_2$. Then $P(X,Y) = (X-\alpha_3)(Y-\beta_1)$ is a multilinear polynomial that vanishes at all the $(\alpha_i, \beta_i)$ and is nonzero at $(0,0)$.
        \item {\bf Case 3:} All the $\alpha_i$ are distinct and all the $\beta_i$ are distinct. In this case, we use the basic fact that M\"{o}bius transformations can take any $3$ distinct points to any $3$ distinct points.
        Thus, there is a M\"{o}bius transformation
        $$A(X) = - \frac{v_1 + v_X X}{v_Y + v_{XY} X }$$
        with $A(\alpha_i) = \beta_i$ for each $i$.
        Rearranging this, we get that for the multilinear polynomial $P(X,Y)$ given by:
        \begin{align}
            \label{eq:PfromMob}
            P(X,Y) = v_1 + v_X X + v_Y Y + v_{XY} XY
        \end{align}
        satisfies, for each $i$:
        $$ P(\alpha_i, \beta_i) = 0.$$

        We now prove that $P(0,0)$ is nonzero.
        This is equivalent to showing that $A(0) \neq 0$. 
        Suppose not; namely suppose $A(0) = 0$.
        Then the map $A : \mathbb C \to \mathbb C$
        satsifies:
        \begin{itemize}
            \item $A(\alpha_i) = \beta_i$ for each $i$. In particular, since all $\alpha_i$ are distinct, all $\beta_i$ are distinct, they lie on the unit circle $|z| = 1$, and M\"{o}bius transformations take circles to circles, we get that $A$ maps the unit circle to itself.

            \item $A(0) = 0$.
            \item Thus $A$ maps the open unit disc $\D = \{ z \mid |z|  < 1\}$ to itself.
        \end{itemize}

        It is well known~\cite[Chapter 12]{RudinRealComplexAnalysis} that M\"{o}bius transformations mapping $\D$ to $\D$ are of the form:
        $$A(z) =  \frac{a z - \bar{b}}{b z - \bar{a}}.$$
        If in addition we have $A(0) =0$, then we get:
        $$0 = A(0) = \frac{-\bar{b}}{-\bar{a}},$$
        which means that $b = 0$.
        Thus $A(z)$ is of the form
        $$ A(z) = \lambda z,$$ 
        for some $\lambda = \frac{a}{-\bar{a}}\in \mathbb C$.

Now $A(\alpha_i) = \beta_i$ for $i=1,2,3$ implies $\beta_i = \lambda
\alpha_i$.  In particular, for any $i\neq j$,
\[
\frac{\beta_i}{\beta_j}
= \frac{\lambda\alpha_i}{\lambda\alpha_j}
= \frac{\alpha_i}{\alpha_j}.
\]
The left-hand side is an $m_2$-th root of unity, and the right-hand side
is an $m_1$-th root of unity.  Since $\gcd(m_1,m_2)=1$, the only complex
number that is simultaneously an $m_1$-th and an $m_2$-th root of unity
is $1$.
So we must have $\beta_i/\beta_j = 1$ and hence $\beta_i =
\beta_j$, contradicting the fact that we are in the case where the $\beta_i$ are all distinct. 

Thus our assumption that $P(0,0) = 0$ is wrong, and we conclude that the multilinear polynomial $P(X,Y)$ specified by Equation~\eqref{eq:PfromMob} is as desired.
    \end{itemize}
This concludes the proof of Lemma~\ref{lem:multi}.
\end{proof}

By the discussion preceding Lemma~\ref{lem:multi}, we conclude that there are no delta functions on $G$ with Fourier sparsity $\leq 3$.
\end{proof}

\end{document}